\newtheorem{theorem}{Theorem}
\newtheorem{lemma}{Lemma}
\pgfplotsset{
    % define the layers you need.
    % (Don't forget to add `main' somewhere in that list!!)
    layers/my layer set/.define layer set={
        background,
        main,
        foreground
    }{
        % you could state styles here which should be moved to
        % corresponding layers, but that is not necessary here.
        % That is why we don't state anything here
    },
    % activate the newly created layer set
    set layers=my layer set,
    compat=1.18,
}
\pgfplotsset{compat=newest}
\newenvironment{customlegend}[1][]{%
	\begingroup
	% inits/clears the lists (which might be populated from previous
	% axes):
	\csname pgfplots@init@cleared@structures\endcsname
	\pgfplotsset{#1}%
}{%
	% draws the legend:
	\csname pgfplots@createlegend\endcsname
	\endgroup
}%
\def\addlegendimage{\csname pgfplots@addlegendimage\endcsname}
\newcommand\HUGE{\@setfontsize\Huge{200}{240}}
\begin{document}

\newtheorem{manualtheoreminner}{Theorem}
\newenvironment{manualtheorem}[1]{%
  \renewcommand\themanualtheoreminner{#1}%
  \manualtheoreminner
}{\endmanualtheoreminner}

\title{Verifiable Reinforcement Learning Systems \\ via Compositionality}

\author{\name Cyrus Neary \email cneary@utexas.edu \\
        \addr The University of Texas at Austin, USA
        \AND
        \name Aryaman Singh Samyal \email aryamansinghsamyal@utexas.edu \\
        \addr The University of Texas at Austin, USA
        \AND
        \name Christos Verginis \email christos.verginis@angstrom.uu.se \\
        \addr Uppsala University, Sweden 
        \AND
        \name Murat Cubuktepe \email mcubuktepe@utexas.edu \\
        \addr The University of Texas at Austin, USA
        \AND
        \name Ufuk Topcu \email utopcu@utexas.edu \\
        \addr The University of Texas at Austin, USA
      }

\maketitle

\newcommand{\defeq}{\vcentcolon=}

\setlength\marginparwidth{74pt}
\newcommand{\colorpar}[3]{\colorbox{#1}{\parbox{#2}{#3}}}
\newcommand{\marginremark}[3]{\marginpar{\colorpar{#2}{\linewidth}{\color{#1}#3}}}
\newcommand{\cn}[1]{\marginremark{red}{white}{\scriptsize{[CN]~ #1}}}
\newcommand{\mc}[1]{\marginremark{red}{white}{\scriptsize{[MC]~ #1}}}

\newcommand{\bernoulliProb}{p}
\newcommand{\rewardRV}{R}
\newcommand{\rewardFunction}{R}
\newcommand{\rewardExpectedVal}{r}

% MDP stuff
\newcommand{\mdp}{M}
\newcommand{\mdpStateSet}{S}
\newcommand{\mdpState}{s}
\newcommand{\mdpActionSet}{A}
\newcommand{\mdpAction}{a}
\newcommand{\mdpRewardFunction}{R}
\newcommand{\mdpCommonReward}{r}
\newcommand{\mdpTransition}{P}
\newcommand{\mdpDiscount}{\gamma}
\newcommand{\mdpInitialState}{\mdpState_I}
\newcommand{\mdpInitialDist}{\alpha_I}
\newcommand{\mdpCostFunction}{C}

\newcommand{\pomdp}{\mdp}
\newcommand{\pomdpObservationSet}{Z}
\newcommand{\pomdpObservation}{z}
\newcommand{\pomdpObservationFunction}{\mathcal{O}}

\newcommand{\distribution}{\Delta}

\newcommand{\stateAbstraction}{\phi}

\newcommand{\policy}{\pi}
\newcommand{\targetStateSet}{\mdpStateSet_{targ}}
\newcommand{\initialStateSet}{\mdpStateSet_{init}}
\newcommand{\timeHorizon}{T}
\newcommand{\mdpSuccessState}{\mdpState_{g}}
\newcommand{\valueFunction}{V}
\newcommand{\terminationTime}{\tau}

% high-level MDP
\newcommand{\abstractMDP}{\tilde{\mdp}}
\newcommand{\abstractStateSet}{\tilde{\mdpStateSet}}
\newcommand{\abstractState}{\tilde{\mdpState}}
\newcommand{\abstractActionSet}{\tilde{\mdpActionSet}}
\newcommand{\abstractAction}{\tilde{\mdpAction}}
\newcommand{\abstractTransition}{\tilde{\mdpTransition}}
\newcommand{\abstractRewardFunction}{\tilde{\mdpRewardFunction}}
\newcommand{\abstractFailureState}{\abstractState_{\times}}
\newcommand{\abstractSuccessState}{\abstractState_{g}}
\newcommand{\abstractPolicy}{\mu}
\newcommand{\abstractInitialState}{\tilde{\mdpInitialState}}
\newcommand{\abstractInitialStateSet}{\abstractStateSet_{init}}
\newcommand{\abstractInitDist}{\alpha}
\newcommand{\abstractTargetStateSet}{\tilde{\targetStateSet}}

% Bound abstract MDP
\newcommand{\boundMDP}{\bar{\mdp}}
\newcommand{\bernoulliProbBound}{\bar{\bernoulliProb}}
\newcommand{\boundMDPReward}{\bar{R}}
\newcommand{\boundMDPTransition}{\bar{\mdpTransition}}

% High level objective
\newcommand{\probThreshold}{P_{threshold}}
\newcommand{\failThreshold}{F_{threshold}}
\newcommand{\rewardThreshold}{R_{threshold}}

\newcommand{\hlmFailProb}{\delta}
\newcommand{\hlmPolicy}{\abstractPolicy}

\newcommand{\controller}{c}
\newcommand{\controllerSet}{\mathcal{C}}
\newcommand{\controllerInitialStateSet}{\mathcal{I}}
\newcommand{\controllerFinalStateSet}{\mathcal{F}}
\newcommand{\controllerTimeHorizon}{T}
\newcommand{\numControllers}{k}

\newcommand{\successProb}{\sigma}

\newcommand{\eqRelation}{R}

\newcommand{\occupancyVar}{x}

% Controller performance stuff
\newcommand{\controllerInfProb}{\Bar{\successProb}}

% Algorithm stuff
\newcommand{\numSteps}{N}
\newcommand{\initTrainingSteps}{N_{init}}
\newcommand{\estimationRollouts}{N_{est}}
\newcommand{\trainingSteps}{N_{train}}
\newcommand{\maxTrainingSteps}{N_{max}}
\newcommand{\controllerPerformanceEstimate}{\hat{\successProb}}

\newcommand{\lbList}{\mathcal{L}}
\newcommand{\ubList}{\mathcal{U}}

\newcommand{\perfAwareOptProblem}{\Omega}

 \newcommand\titlesize{\fontsize{8.1pt}{10.2pt}\selectfont}

\newcommand{\rewardDistanceConstant}{C}

%% IRL stuff
\newcommand{\entropy}{H}
\newcommand{\trajectory}{\tau}
\newcommand{\dataset}{\mathcal{D}}
\newcommand{\numTrajectories}{N}
\newcommand{\observedOccupancyVar}{\bar{\occupancyVar}^{\mdpDiscount}}

%% Theorem proof macros
\newcommand{\hlmTimeHorizon}{N}
\newcommand{\metaDecisionTime}{\tau}
\newcommand{\numMetaDecision}{m}
\newcommand{\numTimeStep}{n}
\newcommand{\reachTrajectories}{\Gamma}
\newcommand{\sigmaAlg}{\Sigma}
\newcommand{\measure}{\mathbb{P}}
\newcommand{\history}{h}

\begin{abstract}
    We propose a framework for verifiable and compositional reinforcement learning (RL) in which a collection of RL subsystems, each of which learns to accomplish a separate subtask, are composed to achieve an overall task.
    The framework consists of a \textit{high-level} model, represented as a parametric Markov decision process, which is used to plan and analyze compositions of subsystems, and of the collection of \textit{low-level} subsystems themselves.
    The subsystems are implemented as deep RL agents operating under partial observability.
    By defining interfaces between the subsystems, the framework enables automatic decompositions of task specifications, \textit{e.g., reach a target set of states with a probability of at least 0.95}, into individual subtask specifications, \textit{i.e. achieve the subsystem's exit conditions with at least some minimum probability, given that its entry conditions are met}.
    This in turn allows for the independent training and testing of the subsystems. 
    We present theoretical results guaranteeing that if each subsystem learns a policy satisfying its subtask specification, then their composition is guaranteed to satisfy the overall task specification.
    Conversely, if the subtask specifications cannot all be satisfied by the learned policies, we present a method, formulated as the problem of finding an optimal set of parameters in the high-level model, to automatically update the subtask specifications to account for the observed shortcomings.
    The result is an iterative procedure for defining subtask specifications, and for training the subsystems to meet them.
    % As an additional benefit, this procedure allows for particularly challenging or important components of an overall task to be identified automatically, and focused on, during training.
    % Furthermore, the subsystems can be reused in task transfer scenarios as components of new composite systems designed to solve different tasks.
    Experimental results demonstrate the presented framework's novel capabilities in environments with both full and partial observability, discrete and continuous state and action spaces, as well as deterministic and stochastic dynamics.
    % A collection of subsystems are trained, using deep RL algorithms (including image-to-action methods), to navigate different portions of a labyrinth.
    % A cross-labyrinth task specification is then decomposed into subtask specifications.
    % Challenging portions of the labyrinth are automatically avoided and unnecessary subsystems are not trained at all.
    % The result is a compositional RL system that efficiently learns to achieve its objective, and that can be verified against task specifications.
\end{abstract}
\section{Introduction}
\label{sec:intro}

Reinforcement learning (RL) algorithms offer tremendous capabilities in systems that work with unknown environments and limited observational capabilities.
However, there remain significant barriers to their deployment in safety-critical engineering applications.
Autonomous vehicles, manufacturing robotics, and power systems management are examples of complex application domains that require strict adherence of the system's behavior to stakeholder requirements.
However, the verification of RL systems is difficult.
This is particularly true of monolithic end-to-end RL approaches; many model-free RL algorithms, for instance, only output the learned policy and its estimated value function, rendering them opaque for verification purposes.
The difficulty of verification is compounded in engineering application domains, which often require large observation and action spaces, and complicated reward functions.

How do we build complex engineering systems we can trust?
Engineering design principles have long prescribed system modularity as a means to reduce the complexity of individual subsystems \shortcite{haberfellner2019systems,nuseibeh2000requirements}.
By creating well-defined interfaces between subsystems, system-level requirements may be decomposed into component-level ones. 
Conversely, each component may be developed and tested independently, and the satisfaction of component-level requirements may then be used to place assurances on the behavior of the system as a whole.
Building RL systems that incorporate such engineering practices and guarantees is a crucial step toward their widespread deployment.

Toward this end, we develop a framework for verifiable and compositional reinforcement learning.
The framework comprises two levels of abstraction.
The \textit{high level} is used to plan \textit{meta-policies} and to verify their adherence to task specifications, \textit{e.g., reach a particular goal state with a probability of at least 0.9}.
Meta-policies dictate sequences of \textit{subsystems} to execute, each of which is designed to accomplish a specific \textit{subtask}, \textit{i.e. achieve a particular exit condition, given the subsystem is executed from one of its entry conditions}.
We assume a collection of \textit{partially instantiated} subsystems to be given a priori; their entry and exit conditions are known, but the policies they implement are not.
These entry and exit conditions might be defined by pre-existing engineering capabilities, explicitly by a task designer, or by entities within the environment.
At the \textit{low level} of the framework, each subsystem employs RL algorithms to learn policies accomplishing its subtask. 
Figure \ref{fig:approach_illustration} illustrates the major components of the proposed framework.

\begin{figure*}[t]
    \centering
    \input{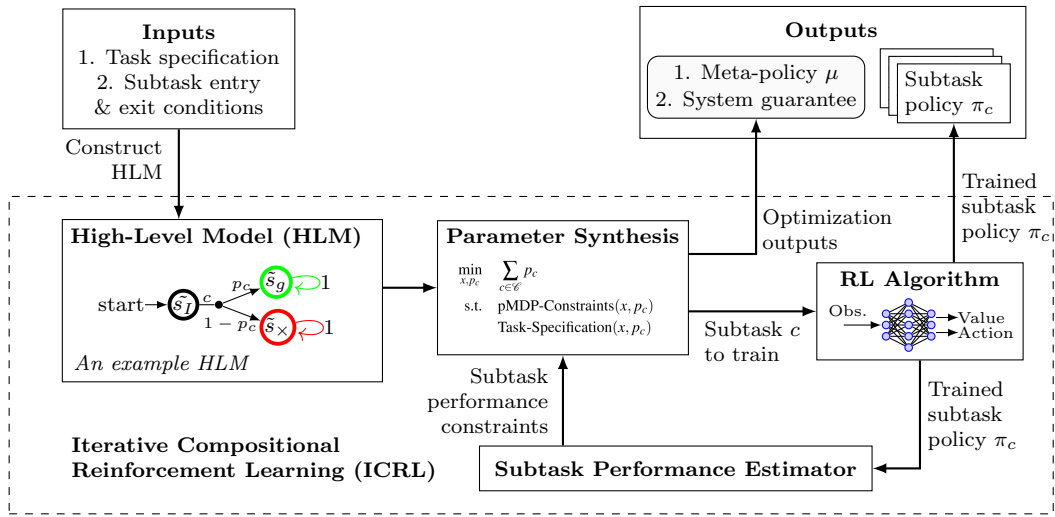}
    \caption{
    An illustration of the proposed framework.
    The task specification and the subtask entry and exit conditions are used to build the \textit{high-level model} (HLM) of the compositional RL system.
    We use the HLM to formulate an optimization problem whose outputs yield a \textit{meta-policy}, the probability of overall task success, and separate specifications for each subtask.
    The subtask specifications are used to select the next subsystem to train using the RL algorithm of choice.
    Estimates of the resulting subsystem policies are then used to update the HLM.
    This iterative process repeats until either the composite system satisfies the task specification, or a user-defined training budget has been exhausted.
    }
    \label{fig:approach_illustration}
\end{figure*} 

We model the high level of the framework using a parametric Markov decision process (pMDP)~\shortcite{cubuktepe2018synthesis,junges2020parameter}.
Each action in the pMDP represents an individual RL subsystem, and the parametric transition probabilities in the pMDP thus represent the likelihoods of outcomes that could occur when the subsystem is executed.
Using sampling-based estimates of subsystem policies, we assign values to the model parameters and use existing MDP techniques for the planning and verification of meta-policies \cite{puterman2014markov,baier2008principles}.

We provide theoretical results that lower-bound the probabilities of outcomes in the task environment by the probabilities of outcomes predicted by the high-level pMDP model.
By building on this mathematical relationship,
we use the proposed framework to develop algorithms offering the following novel capabilities.

\begin{enumerate}
    \item \textbf{Automatic decomposition of task specifications.}
    We formulate, as the problem of finding an optimal set of parameters in the pMDP, a method to automatically decompose the task specification into subtask specifications, allowing for independent learning and verification of the subsystems.
    In words, the solution to this optimization problem allows us to automatically answer the following question: \textit{if we require our compositional system to satisfy a certain specification, how can we automatically generate the necessary subtask specifications for each of its subsystems}?
    This capability leads to new system-level decision-making problems, which we address in this paper.
    For example, it allows for the goal-oriented allocation of computational resources; subsystems should only be developed insofar as they are necessary to the satisfaction of the system-level specification.
    
    \item \textbf{Learning to satisfy subtask specifications.}
    Any RL method can be used to learn the subsystem policies, so long as the learned policies satisfy the relevant subtask specification.
    We present a subsystem reward function definition, in terms of the exit conditions of the subsystem, that motivates the learning of policies satisfying the subtask specification.
    Furthermore, these subtask specifications provide an \textit{interface} between the subsystems, allowing for the analysis of their compositions. In particular, we provide theoretical guarantees that if each of the learned subsystem policies satisfies its subtask specifications, a composition of them exists satisfying the specifications on the overall task.
    
    \item \textbf{Iterative specification refinement.}
    However, if some of the subtask specifications cannot be satisfied by the corresponding learned policies, sampling-based estimates of their behavior are used to update the high-level model.
    These updates effectively allow for capability-aware system-level planning; upper bounds on the capabilities of particular subsystems might be used to revise system-level plans, in order to better reflect what might realistically be achieved by the subsystems.
    This automatic refinement naturally leads to a compositional RL algorithm that iterates between computing subtask specifications, and training the corresponding subsystems to achieve them.
    
    \item \textbf{System modularity: prediction and verification in task transfer.}
    By providing a mathematical interface between the subtasks, the presented framework allows for previously learned subtask policies to be reused as components of new high-level models, designed to solve different tasks.
    Furthermore, the subtask specifications themselves may be reused to perform verification within these new models, without the need for further training.

\end{enumerate}

Results from a suite of numerical experiments exemplify these novel capabilities in variety of problem settings.
The experiments include task environments with both full and partial observability, discrete and continuous state and action spaces, as well as deterministic and stochastic dynamics. 
We use deep RL algorithms, including vision-to-action models that only receive egocentric image observations of the agent's immediate surroundings, to train individual subsystems to complete subtasks in the environments.
These subsystems are then composed to complete larger and more challenging tasks.
Using the proposed compositional RL algorithm, we automatically decompose probabilistic task specifications into subtask specifications. 
We observe that the system automatically plans to avoid challenging subtasks during its operation, while policies for unnecessary subtasks are not trained at all.
Finally, we demonstrate the benefits of the modular nature of the approach through an example in which previously trained subsystems are reused in new composite systems.
In one particular scenario, we observe that the proposed algorithm correctly asserts that its planned composition of RL subsystems will satisfy the probabilistic task specification in a new task environment.
It does so without taking a single step in the said environment for either training or evaluation.

Building on the results of the conference version \shortcite{neary2022verifiable}, the present paper additionally includes the following novel material:
\begin{itemize}
    \item We extend the proposed compositional RL framework to partially observable settings.
    This extension includes updates to the presentation of the problem formulation and its theoretical analysis, as well as a new experimental case study that trains compositional RL systems using only egocentric image observations.
    \item We provide an expanded collection of numerical experiments.
    In addition to the inclusion of the aforementioned image-based experiments, we also include a new experimental case study that demonstrates how the modular nature of the proposed compositional RL systems enables the transfer of previously trained subsystem policies to the solution of different tasks. 
    \item We provide a proof of the paper's main theoretical result, and additional discussion surrounding its interpretation and application.
\end{itemize}

\paragraph{Structure of the paper.}
In \S \ref{sec:problem_statement} we define the notions of tasks, specifications, and RL systems, which are necessary for the development of the proposed framework.
We subsequently introduce the \textit{high-level model} (HLM) in \S \ref{sec:hlm}. 
Here, we present the paper's main theoretical result and we present how the HLM can be used to plan policies and to automatically decompose the task specification into separate specifications for the individual subtasks.
In \S \ref{sec:icrl}, we discuss how to learn policies satisfying the subtask specifications, as well as how empirical rollouts of existing policies can be used to update the HLM, and refine the subtask specifications.
In the remainder of this section, we present a novel compositional RL algorithm that alternates between computing subtask specifications, and learning subsystem policies to satisfy them.
Finally, in \S \ref{sec:experiments}, we demonstrate the effectiveness of the proposed framework through a suite of numerical case studies.
\section{Related Work}

We note that the proposed framework is closely related to hierarchical RL (HRL) \shortcite{sutton1999between,barto2003recent,kulkarni2016hierarchical,vezhnevets2017feudal,nachum2019data,levy2017learning}.
However, our framework differs from these methods in that it explicitly builds a model of the upper levels of the decision-making hierarchy, and it uses the said model not only for planning, but also for the probabilistic verification of compositions of RL subsystems.
As a result, it adds a number of benefits to existing HRL methods.
These benefits include: a systematic means to decompose and refine task specifications, explicit reasoning over the probabilities of events, the use of planning-based solution techniques (which could incorporate additional problem constraints), and flexibility in the choice of RL algorithm used to learn subsystem policies.
HRL methods use task decompositions to reduce computational complexity, particularly in problems with large state and action spaces \shortcite{pateria2021hierarchical}.
However, they typically focus on the efficient maximization of discounted reward and they require the meta-policy to be learned; models of the high-level problem are typically not explicitly constructed.
We emphasize that the framework we present builds a model of the high-level problem with the specific aim of enabling verifiable RL against a rich set of task specifications (\textit{e.g., safely reach a target set with a required probability of success}), while enjoying a similar reduction in sample complexity to other HRL methods.

Compositional verification has been studied in formal methods \shortcite{namautomatic2008,kwiatkowska2010assume,feng2011automated}, but not in the context of RL.
Conversely, recent works have used structured task knowledge to decompose RL problems, however, they do not study how such information can be used to verify RL systems against probabilistic task specifications. 
\shortciteA{camacho2017non} and \shortciteA{littman2017environment} both define a task specification language based on linear temporal logic, and subsequently use it to generate reward functions for RL. 
\shortciteA{sarathy2020spotter} incorporates RL with symbolic planning models to learn new operators---similar to our subtasks---to aid in the completion of planning objectives. 
These works all use structured task knowledge to decompose RL problems, however, they do not provide methods for the automated verification and decomposition of task success probabilities, or for the targeted training of subsystems.

Meanwhile, \shortciteA{icarte2022reward,icarte2018using} use reward machines, finite-state machines encoding temporally extended tasks in terms of atomic propositions, to break tasks into stages for which separate policies can be learned.  
\shortciteA{neary2020reward} extends the use of reward machines to the multi-agent RL setting, decomposing team tasks into subtasks for individual learners.
\shortciteA{toro2019learning} and \shortciteA{xu2020joint} propose to simultaneously learn reward machines and RL policies. 
Similarly, \shortciteA{furelos2021induction} propose to learn automata-based representations of RL tasks.
However, we note that while reward machines and similar kinds of automata provide structured representations of tasks, they do not necessarily model the outcomes of subsystems in a modular and reusable fashion. 
A reward machine's states are related to progress through the particular task that it represents. 
By contrast, the states of the high-level model that we propose in this work represent task-agnostic collections of environment states.
Our high-level model thus represents its subsystems as entities that are coupled, yet modular.
Furthermore, we estimate the probabilities that each of the subsystem policies succeeds at its subtask.
These differences from the reward machine formulation, along with our theoretical results that relate the high-level model's predictions to the outcomes of compositions of RL subsystems, are what enables our approach to verification against probabilistic task specifications, automated decomposition and iterative refinement of subtask specifications, and modular reuse of subsystems in new tasks.
To the best of our knowledge, these are novel capabilities for RL-based systems.
\section{The Compositional Reinforcement Learning Framework}
\label{sec:problem_statement}

To provide intuitive examples of the notions of tasks, subtasks, systems, and subsystems, we consider the example labyrinth environment shown in Figure \ref{fig:running_example}. 
The \textit{system} executes its constituent \textit{subsystems} in this environment to complete an overall task.
The \textit{task} is to safely navigate from the labyrinth's initial state in the top left corner to the goal state in the bottom left corner. Satisfaction of the \textit{task specification} requires that the system successfully completes the task with a probability of at least \(0.95\).
As an added difficulty, lava exists within some of the rooms, represented in the figure by the orange rectangles. 
If the lava is touched, the task is automatically failed. 
This task is naturally decomposed into separate \textit{subtasks}, each of which navigates an individual room, and is executed by a separate subsystem.

\subsection{Preliminaries}
We model the task environment as a partially observable Markov decision process (POMDP), which is defined by a tuple \(\pomdp\) \(= (\mdpStateSet,\) \(\mdpActionSet,\) \(\mdpTransition,\) \(\pomdpObservationSet,\) \(\pomdpObservationFunction)\). 
Here, \(\mdpStateSet\) is a set of states, \(\mdpActionSet\) is a set of actions, \(\mdpTransition : \mdpStateSet \times \mdpActionSet \times \mdpStateSet \to [0,1]\) is a transition probability function, \(\pomdpObservationSet\) is a set of possible observations, and \(\pomdpObservationFunction : \mdpStateSet \times \pomdpObservationSet \to [0,1]\) is an observation probability function. 
We note that \(\mdpStateSet\), \(\mdpActionSet\), and \(\pomdpObservationSet\) can be uncountable sets---representing environments with continuous state and control actions---or they can be countable sets indexing finite collections of states and actions. 
Our experiments examine both cases.
However, for notational simplicity, we present the framework for countable instances of these sets.

A policy within the POMDP is a function \(\policy : (\pomdpObservationSet \times \mdpActionSet)^* \times \pomdpObservationSet \times \mdpActionSet \to [0,1]\) that maps histories of observations and actions \(\pomdpObservation_{0} \mdpAction_{0} \ldots \pomdpObservation_{t} \in (\pomdpObservationSet \times \mdpActionSet)^* \times \pomdpObservationSet \) to distributions over actions \(\mdpAction \in \mdpActionSet\). 
In this work, we use reinforcement learning (RL) algorithms to learn policies \(\policy\) that accomplish \textit{subtasks} specified in \(\mdp\).
However, we note that the compositional framework we present is agnostic to the implementation details of the policies and the RL algorithms used to train them, so long as the policies are performant with respect to their objectives in \(\mdp\), which we define in \S \ref{sec:subsystems_and_subtasks}.

Given a POMDP \(\pomdp\), a policy \(\policy\), and a target set of states \(\targetStateSet \subseteq \mdpStateSet\), we define \(\mathbb{P}^{\mdpState}_{\mdp}(\Diamond \targetStateSet | \policy)\) to be the probability of eventually reaching some state \(\mdpState' \in \targetStateSet\), beginning from the initial state \(\mdpState\), under policy \(\policy\).
Similarly, \(\mathbb{P}^{\mdpState}_{\mdp}(\Diamond_{\leq \timeHorizon} \targetStateSet| \policy)\) denotes the probability of reaching the target set from state \(\mdpState\) within some finite time horizon \(\timeHorizon\).

\begin{figure*}[t!]
    \centering
    \begin{subfigure}[t]{0.3\textwidth}
        \centering \newcommand{\pathwidth}{2mm}
\newcommand{\circleradius}{0.5}
\newcommand{\circlecolor}{blue!30!white}
\resizebox{1.0\textwidth}{!}{
% \resizebox{0.78\textwidth}{!}{
\begin{tikzpicture}%[scale=1.0]
%Define colors
\definecolor{color0}{rgb}{0.12156862745098,0.466666666666667,0.705882352941177}
\definecolor{color1}{rgb}{0.682352941176471,0.780392156862745,0.909803921568627}
\definecolor{color2}{rgb}{1,0.498039215686275,0.0549019607843137}
\definecolor{color3}{rgb}{1,0.733333333333333,0.470588235294118}
\definecolor{color4}{rgb}{0.172549019607843,0.627450980392157,0.172549019607843}
\definecolor{color5}{rgb}{0.596078431372549,0.874509803921569,0.541176470588235}
\definecolor{color6}{rgb}{0.83921568627451,0.152941176470588,0.156862745098039}
\definecolor{color7}{rgb}{1,0.596078431372549,0.588235294117647}
\definecolor{color8}{rgb}{0.580392156862745,0.403921568627451,0.741176470588235}
\definecolor{color9}{rgb}{0.772549019607843,0.690196078431373,0.835294117647059}
\definecolor{color10}{rgb}{0.549019607843137,0.337254901960784,0.294117647058824}
\definecolor{color11}{rgb}{0.768627450980392,0.611764705882353,0.580392156862745}

%Draw the borders of the labyrinth
\filldraw[fill=black!2!white] (0,0) rectangle (20,20);
\fill[fill=black!50!white] (0,0) rectangle (20,1);
\fill[fill=black!50!white] (0,0) rectangle (1,20);
\fill[fill=black!50!white] (0,19) rectangle (20,20);
\fill[fill=black!50!white] (19,0) rectangle (20,20);
% Room 1
\fill[fill=black!50!white] (0,14) rectangle (3,15);
\fill[fill=black!50!white] (4,14) rectangle (10,15);
\fill[fill=black!50!white] (5,14) rectangle (6,17);
\fill[fill=black!50!white] (5,18) rectangle (6,20);
% Top room
\fill[fill=black!50!white] (11,14) rectangle (14,15);
\fill[fill=black!50!white] (15,14) rectangle (20,15);
% Left rooms
\fill[fill=black!50!white] (8,4) rectangle (9,15);
\fill[fill=black!50!white] (0,9) rectangle (5,10);
\fill[fill=black!50!white] (6,9) rectangle (9,10);
\fill[fill=black!50!white] (0,4) rectangle (3,5);
% Lava in left rooms
\filldraw[fill=orange] (2,12) rectangle (5,13);
\filldraw[fill=orange] (6,11) rectangle (8,12);
\filldraw[fill=orange] (6,5) rectangle (8,6);
\filldraw[fill=orange] (3,7) rectangle (5,8);
% Bottom room
\fill[fill=black!50!white] (4,4) rectangle (16,5);
% Middle room
\fill[fill=black!50!white] (13,4) rectangle (14,15);
% Right room
\fill[fill=black!50!white] (17,4) rectangle (20,5);
% Draw the agent in its starting location
% \node[isosceles triangle,
%     isosceles triangle apex angle=30,
%     draw,
%     fill=red!90,
%     minimum size =1cm] (T90)at (1.5,18.5){};
\node at (1.9, 18.1) [fill=\circlecolor] {\fontsize{80}{80}\selectfont\(\abstractInitialState\)};
% \filldraw[fill=\circlecolor, dashed] (1.5, 18.5) circle (\circleradius);
% Draw the goal location
\node at (3.0,2.0) [fill=green, rectangle] {\fontsize{50}{60}\selectfont\(\controllerFinalStateSet_{targ}\)};
% \filldraw[fill=green] (1,1) rectangle (2.5,2.5) {\fontsize{50}{60}\selectfont\(\controllerFinalStateSet_{targ}\)};
% Draw entrance and exit conditions
\filldraw[fill=\circlecolor] (3.5, 14.5) circle (\circleradius);
\filldraw[fill=\circlecolor] (5.5, 17.5) circle (\circleradius);
\filldraw[fill=\circlecolor] (10.5, 14.5) circle (\circleradius);
\filldraw[fill=\circlecolor] (14.5, 14.5) circle (\circleradius);
\filldraw[fill=\circlecolor] (5.5, 9.5) circle (\circleradius);
\filldraw[fill=\circlecolor] (3.5, 4.5) circle (\circleradius);
\filldraw[fill=\circlecolor] (12.5, 5.5) circle (\circleradius);
\filldraw[fill=\circlecolor] (16.5, 4.5) circle (\circleradius);
\filldraw[fill=\circlecolor] (10.5, 2.0) circle (\circleradius);

% Draw controller 0 path
\draw[->, color=color0, line width=\pathwidth] (2.0, 17.5) -- (2.0, 15.5);
\draw[color=color0, line width=\pathwidth] (2.0, 15.5) -- (3.5, 15.5);
\draw[->, color=color0, line width=\pathwidth] (3.5, 15.5) -- (3.5, 15.0);
\node at (2.7, 16.5) [color=color0, xshift=5] {\resizebox{1.5cm}{!}{\(\bm{\controller_{0}}\)}};
% Draw controller 1 path
\draw[->, color=color1, line width=\pathwidth] (3.0, 18.5) -- (4.5, 18.5);
\draw[color=color1, line width=\pathwidth] (4.5, 18.5) -- (4.5, 17.5);
\draw[->, color=color1, line width=\pathwidth] (4.5, 17.5) -- (5.0, 17.5);
\node at (3.7, 17.5) [color=color1] {\resizebox{1.5cm}{!}{\(\bm{\controller_{1}}\)}};
% Draw controller 2 path
\draw[color=color2, line width=\pathwidth] (6.0, 17.5) -- (6.5, 17.5);
\draw[->, color=color2, line width=\pathwidth] (6.5, 17.5) -- (6.5, 15.5);
\draw[color=color2, line width=\pathwidth] (6.5, 15.5) -- (10.5, 15.5);
\draw[->, color=color2, line width=\pathwidth] (10.5, 15.5) -- (10.5, 15.0);
\node at (8.5, 16.2) [color=color2] {\resizebox{1.5cm}{!}{\(\bm{\controller_{2}}\)}};
% Draw controller 3 path
\draw[->, color=color3, line width=\pathwidth] (6.0, 17.5) -- (14.5, 17.5);
\draw[->, color=color3, line width=\pathwidth] (14.5, 17.5) -- (14.5, 15.0);
\node at (15.3, 16.5) [color=color3, xshift=15] {\resizebox{1.5cm}{!}{\(\bm{\controller_{3}}\)}};
% Draw controller 4 path
\draw[color=color4, line width=\pathwidth] (3.5, 14.0) -- (3.5, 13.5);
\draw[->, color=color4, line width=\pathwidth] (3.5, 13.5) -- (5.5, 13.5);
\draw[->, color=color4, line width=\pathwidth] (5.5, 13.5) -- (5.5, 10.0);
\node at (4.7, 11.0) [color=color4, xshift=-10] {\resizebox{1.5cm}{!}{\(\bm{\controller_{4}}\)}};
% Draw controller 5 path
\draw[->, color=color5, line width=\pathwidth] (5.5, 9.0) -- (5.5, 5.5);
\draw[color=color5, line width=\pathwidth] (5.5, 5.5) -- (3.5, 5.5);
\draw[->, color=color5, line width=\pathwidth] (3.5, 5.5) -- (3.5, 5.0);
\node at (6.3, 8.0) [color=color5, xshift=10] {\resizebox{1.5cm}{!}{\(\bm{\controller_{5}}\)}};
% Draw controller 6 path
\draw[color=color6, line width=\pathwidth] (10.5, 14.0) -- (10.5, 13.5);
\draw[color=color6, line width=\pathwidth] (10.5, 13.5) -- (9.5, 13.5);
\draw[->, color=color6, line width=\pathwidth] (9.5, 13.5) -- (9.5, 9.5);
\draw[->, color=color6, line width=\pathwidth] (9.5, 9.5) -- (9.5, 5.5);
\draw[->, color=color6, line width=\pathwidth] (9.5, 5.5) -- (12.0, 5.5);
\node at (10.3, 12.0) [color=color6, xshift=5] {\resizebox{1.5cm}{!}{\(\bm{\controller_{6}}\)}};
% Draw controller 7 path
\draw[->, color=color7, line width=\pathwidth] (12.5, 6.0) -- (12.5, 9.5);
\draw[->, color=color7, line width=\pathwidth] (12.5, 9.5) -- (12.5, 13.5);
\draw[color=color7, line width=\pathwidth] (12.5, 13.5) -- (10.5, 13.5);
\draw[color=color7, line width=\pathwidth] (10.5, 13.5) -- (10.5, 14.0);
\node at (11.7, 8.0) [color7, xshift=-5] {\resizebox{1.5cm}{!}{\(\bm{\controller_{7}}\)}};
% Draw controller 8 path
\draw[->, color=color8, line width=\pathwidth] (14.5, 14.0) -- (14.5, 9.5);
\draw[->, color=color8, line width=\pathwidth] (14.5, 9.5) -- (16.5, 9.5);
\draw[->, color=color8, line width=\pathwidth] (16.5, 9.5) -- (16.5, 5.0);
\node at (17.0, 10.0) [color=color8, xshift=-40, yshift=30] {\resizebox{1.5cm}{!}{\(\bm{\controller_{8}}\)}};
% Draw controller 9 path
\draw[->, color=color9, line width=\pathwidth] (3.5, 4.0) -- (3.5, 3.0);
\node at (4.3, 3.5) [color=color9, xshift=20, yshift=-2] {\resizebox{1.5cm}{!}{\(\bm{\controller_{9}}\)}};
% Draw controller 10 path
\draw[->, color=color10, line width=\pathwidth] (16.5, 4.0) -- (16.5, 2.0);
\draw[->, color=color10, line width=\pathwidth] (16.5, 2.0) -- (11.0, 2.0);
\node at (13.5, 2.7) [color=color10] {\resizebox{2.0cm}{!}{\(\bm{\controller_{10}}\)}};
% Draw controller 11 path
\draw[->, color=color11, line width=\pathwidth] (10.0, 2.0) -- (5.0, 2.0);
\node at (7.5, 2.7) [color=color11] {\resizebox{2.0cm}{!}{\(\bm{\controller_{11}}\)}};
\end{tikzpicture}%
}%
        \label{fig:labyrinth_gridworld}
    \end{subfigure}%
    ~
    \hspace*{18mm}
    \begin{subfigure}[t]{0.55\textwidth}
        \centering % \tikzset{auto, ->, >=stealth, node distance=2.5cm, every node/.style={scale=0.8, minimum size=0pt, inner sep=0pt}}

\tikzstyle{branch}=[fill,shape=circle,minimum size=5pt,inner sep=0pt]
\def\horizontaldistance{2.5cm}
\def\verticaldistance{2.0cm}
\def\nodedistance{2.5cm}
\def\branchdist{0.8cm}

\def\stateOutlineThickness{0.5mm}
\def\edgeThickness{0.5mm}

\tikzset{auto, ->, >=stealth, node distance=\nodedistance, node/.style={scale=0.8, minimum size=0pt, inner sep=0pt}}

\resizebox{1.0\textwidth}{!}{
\begin{tikzpicture}[scale=1.0]
    \node[state, line width=\stateOutlineThickness] (s_init) {\(\abstractInitialState\)};
    \path (s_init)+(\horizontaldistance, 0.0) node (s0) [state, line width=\stateOutlineThickness] {\(\abstractState_0\)};
    \path (s_init)+(0.0, -\verticaldistance) node (s1) [state, line width=\stateOutlineThickness] {\(\abstractState_1\)};
    \path (s1)+(0.0, -\verticaldistance) node (s4) [state, line width=\stateOutlineThickness] {\(\abstractState_4\)};
    \path (s4)+(\horizontaldistance, 0.0) node (s5) [state, line width=\stateOutlineThickness] {\(\abstractState_5\)};
    
    \path (s0)+(1.8cm, -1.0cm) node (s2) [state, line width=\stateOutlineThickness] {\(\abstractState_2\)};
    \path (s2)+(0.0, -\verticaldistance) node (s6) [state, line width=\stateOutlineThickness] {\(\abstractState_6\)};
    
    \path (s0)+(5.0cm, 0.0) node (s3) [state, line width=\stateOutlineThickness] {\(\abstractState_3\)};
    \path (s3)+(0.0, -\verticaldistance) node (s7) [state, line width=\stateOutlineThickness] {\(\abstractState_7\)};
    \path (s7)+(0.0, -\verticaldistance) node (s8) [state, line width=\stateOutlineThickness] {\(\abstractState_8\)};
    
    % Failure and goal states
    \path (s1)+(2.2cm, 0.0) node (s_fail) [state, draw=red, line width=1.0mm] {\(\abstractFailureState\)};
    \path (s_fail)+(3.7cm, 0.0) node (s_fail2) [state, draw=red, line width=1.0mm] {\(\abstractFailureState\)};
    \path (s5)+(3.0cm, 0.0) node (s_goal) [state, draw=green, line width=1.0mm] {\(\abstractSuccessState\)};
    
    % Controller 0
    \path (s_init)+(\branchdist, 0.0) node [branch] (c1) {};
    \path [draw, line width=\edgeThickness, -] (s_init.east) -- node [above, xshift=1mm] {\(\controller_1\)} (c1);
    \path [draw, ->, line width=\edgeThickness] (c1) -- node [above] {\(\bernoulliProb_{\controller_1}\)} (s0);
    \path [draw=red, ->, line width=\edgeThickness] (c1) -- node [above] {} (s_fail);
    % Controller 1
    \path (s_init)+(0.0, -\branchdist) node [branch] (c0) {};
    \path [draw, -, line width=\edgeThickness] (s_init.south) -- node [left] {\(\controller_0\)} (c0);
    \path [draw, ->, line width=\edgeThickness] (c0) -- node [left] {\(\bernoulliProb_{\controller_0}\)} (s1);
    \path [draw=red, ->, line width=\edgeThickness] (c0) -- node [above] {} (s_fail);
    % Controller 2
    \path (s0)+(0.87cm, -0.33cm) node [branch] (c2) {};
    \path [draw, -, line width=\edgeThickness] (s0.east) -- node [below, xshift=-0.7mm, yshift=0.7mm] {\(\controller_2\)} (c2);
    \path [draw, ->, line width=\edgeThickness] (c2) -- node [below, xshift=-1.5mm, yshift=0.8mm] {\(\bernoulliProb_{\controller_2}\)} (s2);
    \path [draw=red, ->, line width=\edgeThickness] (c2) -- node [above] {} (s_fail);
    % Controller 3
    \path (s0)+(2.2cm, 0.0cm) node [branch] (c3) {};
    \path [draw, -, line width=\edgeThickness] (s0.east) -- node [above] {\(\controller_3\)} (c3);
    \path [draw, ->, line width=\edgeThickness] (c3) -- node [above] {\(\bernoulliProb_{\controller_3}\)} (s3);
    \path (c3)+(-0.5, -0.5) node (c3_fail) {};
    \path [draw=red, ->, line width=\edgeThickness] (c3) -- node [above] {} (s_fail2);
    % Controller 6
    \path (s2)+(-0.4cm, -\branchdist) node [branch] (c6) {};
    \path [draw, -, line width=\edgeThickness] (s2.south) -- node [left, yshift=0.7mm, xshift=0.5mm] {\(\controller_6\)} (c6);
    \path [draw, ->, line width=\edgeThickness] (c6) -- node [left] {\(\bernoulliProb_{\controller_6}\)} (s6.north);
    \path [draw=red, ->, line width=\edgeThickness] (c6) -- node [above] {} (s_fail);
    % Controller 7
    \path (s6)+(0.4cm, 0.8cm) node [branch] (c7) {};
    \path [draw, -, line width=\edgeThickness] (s6.north) -- node [right, yshift=-1mm] {\(\controller_7\)} (c7);
    \path [draw, ->, line width=\edgeThickness] (c7) -- node [right] {\(\bernoulliProb_{\controller_7}\)} (s2.south);
    \path (c7)+(-0.6, 0.05) node (c7_fail) {};
    \path [draw=red, ->, line width=\edgeThickness] (c7) -- node [above] {} (s_fail2);
    % Controller 4
    \path (s1)+(0.0, -\branchdist) node [branch] (c4) {};
    \path [draw, -, line width=\edgeThickness] (s1.south) -- node [left] {\(\controller_4\)} (c4);
    \path [draw, ->, line width=\edgeThickness] (c4) -- node [left] {\(\bernoulliProb_{\controller_4}\)} (s4);
    \path [draw=red, ->, line width=\edgeThickness] (c4) -- node [above] {} (s_fail);
    % Controller 5
    \path (s4)+(\branchdist, 0.0) node [branch] (c5) {};
    \path [draw, -, line width=\edgeThickness] (s4.east) -- node [below, xshift=1mm] {\(\controller_5\)} (c5);
    \path [draw, ->, line width=\edgeThickness] (c5) -- node [below] {\(\bernoulliProb_{\controller_5}\)} (s5);
    \path [draw=red, ->, line width=\edgeThickness] (c5) -- node [above] {} (s_fail);
    % Controller 9
    \path (s5)+(\branchdist, 0.0) node [branch] (c9) {};
    \path [draw, -, line width=\edgeThickness] (s5.east) -- node [below, xshift=1mm] {\(\controller_9\)} (c9);
    \path [draw, ->, line width=\edgeThickness] (c9) -- node [below] {\(\bernoulliProb_{\controller_9}\)} (s_goal);
    \path [draw=red, ->, line width=\edgeThickness] (c9) -- node [above] {} (s_fail);
    % Controller 8
    \path (s3)+(0.0, -\branchdist) node [branch] (c8) {};
    \path [draw, -, line width=\edgeThickness] (s3.south) -- node [right, yshift=-1mm] {\(\controller_8\)} (c8);
    \path [draw, ->, line width=\edgeThickness] (c8) -- node [right] {\(\bernoulliProb_{\controller_8}\)} (s7);
    \path (c8)+(-1.0, -0.3) node (c8_fail) {};
    \path [draw=red, ->, line width=\edgeThickness] (c8) -- node [above] {} (s_fail2);
    
    % Controller 10
    \path (s7)+(0.0, -\branchdist) node [branch] (c10) {};
    \path [draw, -, line width=\edgeThickness] (s7.south) -- node [right, yshift=-1mm] {\(\controller_{10}\)} (c10);
    \path [draw, ->, line width=\edgeThickness] (c10) -- node [right] {\(\bernoulliProb_{\controller_{10}}\)} (s8);
    \path (c10)+(-1.0, 0.2) node (c10_fail) {};
    \path [draw=red, ->, line width=\edgeThickness] (c10) -- node [above] {} (s_fail2);
    % Controller 11
    \path (s8)+(-\branchdist, 0.0) node [branch] (c11) {};
    \path [draw, -, line width=\edgeThickness] (s8.west) -- node [below, xshift=-0.8mm] {\(\controller_{11}\)} (c11);
    \path [draw, ->, line width=\edgeThickness] (c11) -- node [below] {\(\bernoulliProb_{\controller_{11}}\)} (s_goal);
    \path (c11)+(-0.8, 0.5) node (c11_fail) {};
    \path [draw=red, ->, line width=\edgeThickness] (c11) -- node [above] {} (s_fail2);
    
    % Self-loops at the absorbing states
    \path (s_fail) edge [loop above, draw=red, line width=\edgeThickness] node {1} (s_fail);
    \path (s_fail2) edge [loop above, draw=red, line width=\edgeThickness] node {1} (s_fail2);
    \path (s_goal) edge [loop above, draw=green, line width=\edgeThickness] node {1} (s_goal);
    % \path (C) edge [loop above] node {T} (C);
    % \draw (s_init) edge
    % \node[state, below right of=u1] (u3) {$\RMCommonState_3$};
    % \node[state, below right of=u2] (u4) {$\RMCommonState_4$};
    % \node[state, right of=u4] (u5) {$\RMCommonState_5$};
    % \node[state, above right of=u5] (u6) {$\RMCommonState_6$};
    % \node[state, below right of=u5] (u7) {$\RMCommonState_7$};
    % \node[state, accepting, below right of=u6] (u8) {$\RMCommonState_8$};
    % \draw (u1) edge[sloped, anchor=center, below] node{$(\Rendezvous_1,0)$} (u2)
    %     (u1) edge[sloped, anchor=center, above] node{$(\Rendezvous_2,0)$} (u3)
    %     (u2) edge[bend right, sloped, anchor=center, above] node{$(\LeaveRendezvous_1, 0)$} (u1)
    %     (u2) edge[sloped, anchor=center, below] node{$(\Rendezvous_2,0)$} (u4)
    %     (u3) edge[bend left, sloped, anchor=center, above] node{$(\LeaveRendezvous_2, 0)$} (u1)
    %     (u3) edge[sloped, anchor=center, above] node{$(\Rendezvous_1,0)$} (u4)
    %     (u4) edge[bend right, sloped, anchor=center, above] node{$(\LeaveRendezvous_2,  0)$} (u2)
    %     (u4) edge[bend left, sloped, anchor=center, below] node{$(\LeaveRendezvous_1, 0)$} (u3)
    %     (u4) edge[sloped, anchor=center, above] node{$(\Rendezvous,0)$} (u5)
    %     (u5) edge[sloped, anchor=center, above] node{$(\GoalReached_1,0)$} (u6)
    %     (u5) edge[sloped, anchor=center, below] node{$(\GoalReached_2,0)$} (u7)
    %     (u6) edge[sloped, anchor=center, above] node{$(\GoalReached_2,1)$} (u8)
    %     (u7) edge[sloped, anchor=center, below] node{$(\GoalReached_1,1)$} (u8);
\end{tikzpicture}
}
        \label{fig:labyrinth_HLM}
    \end{subfigure}
    \caption{
    An example labyrinth navigation task. Left: An illustration of the environment, as well as an example collection of subsystems, represented by the colored paths. 
    Entry and exit conditions for the various subsystems are shown as blue circles. 
    Right: An illustration of the corresponding HLM. 
    Each subsystem \(\controller\) causes a transition to its successor state with probability \(\bernoulliProb_{\controller}\). 
    Otherwise, the HLM transitions to the failure state \(\abstractFailureState\) with probability \(1 - \bernoulliProb_{\controller}\), visualized by the red transitions.
    } 
    \label{fig:running_example}
\end{figure*}
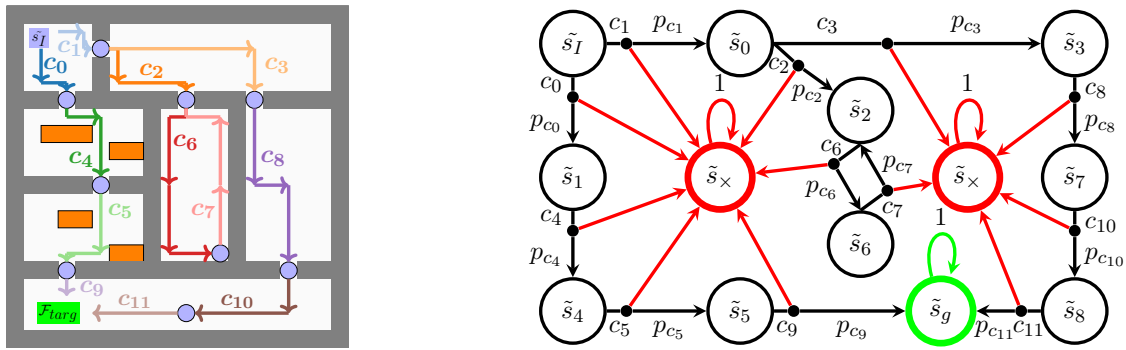

\subsection{Reinforcement Learning Subsystems and Subtasks}
\label{sec:subsystems_and_subtasks}
We define RL subsystems \(\controller\) acting within the environment using the \textit{options} framework \cite{sutton1999between}. 
In particular, we define each RL subsystem by the tuple \(\controller = (\controllerInitialStateSet_{\controller}, \controllerFinalStateSet_{\controller}, \controllerTimeHorizon_{\controller}, \policy_{\controller})\).
Here, \(\controllerInitialStateSet_{\controller} \subseteq \mdpStateSet\) is a set defining the subsystem's \textit{entry conditions}, \(\controllerFinalStateSet_{\controller} \subseteq \mdpStateSet\) is a set representing the subsystem's \textit{exit conditions}, and \(\timeHorizon_{\controller} \in \mathbb{N}\) is the subsystem's allowed \textit{time horizon}.
The \textit{subtask} associated with each subsystem, is to navigate from any entry condition \(\mdpState \in \controllerInitialStateSet_{\controller}\) to any exit condition \(\mdpState' \in \controllerFinalStateSet_{\controller}\) within the subsystem's time horizon \(\timeHorizon_{\controller}\). 
The time horizon is included to ensure that the compositional system will complete its task in a finite time.
We assume that each subsystem may only be \textit{executed}, or begun, from an entry condition \(\mdpState \in \controllerInitialStateSet_{\controller}\) and that its execution ends either when it achieves an exit condition \(\mdpState \in \controllerFinalStateSet_{\controller}\), or when it runs out of time.
Finally, \(\policy_{\controller}\) is the policy that the component implements to complete this objective.

We note that when implementing such subsystems in practice, a mechanism is required to ensure that the subsystem's execution terminates once it reaches an exit condition \(\mdpState' \in \controllerFinalStateSet_{\controller}\).
In the fully observable setting---when the agent may directly observe its current state at every time step---this may be taken for granted.
In settings with partial observability, the agent should either be able use its observations to determine when it has reached an exit condition, or some external mechanism should be designed that triggers when the subsystem completes its subtask.
In this work, we assume that such a mechanism exists and that the subsystem terminates its execution whenever it first reaches some \(\mdpState' \in \controllerFinalStateSet_{\controller}\).

For notational convenience, we define \(\successProb_{\policy_{\controller}}^{\controller}(\mdpState) \defeq \mathbb{P}^{\mdpState}_{\mdp}(\Diamond_{\leq \timeHorizon_{\controller}}\controllerFinalStateSet_{\controller} | \policy_{\controller})\).
A \textit{subtask specification}, is then defined as the requirement that \(\successProb_{\policy_{\controller}}^{\controller}(s) \geq \bernoulliProb_{\controller}\) for every entry condition \(\mdpState \in \controllerInitialStateSet_{\controller}\) of the subsystem. Here, \(\bernoulliProb_{\controller} \in [0,1]\) is a value representing the minimum allowable probability of the subtask success. 
We note that such reachability-based task specifications are very expressive. 
As an example, temporal logic specifications can be expressed as reachability specifications in a so-called product MDP \cite{baier2008principles,hahn2019omega}.

We say a subsystem \(\controller\) is \textit{partially instantiated} when \(\controllerInitialStateSet_{\controller}\), \(\controllerFinalStateSet_{\controller}\), and \(\timeHorizon_{\controller}\) are defined, but its policy \(\policy_{\controller}\) is not.
We define a collection \(\controllerSet = \{\controller_1, \controller_2, ..., \controller_{\numControllers}\}\) of subsystems to be \textit{composable}, if and only if for every \(i,j \in \{1,2,\ldots, \numControllers\}\), either \(\controllerFinalStateSet_{\controller_i} \subseteq \controllerInitialStateSet_{\controller_j}\) or \(\controllerFinalStateSet_{\controller_i} \cap \controllerInitialStateSet_{\controller_j} = \emptyset\).
In words, subsystems are composable when the set of exit conditions of each subsystem is a subset of all the sets of entry conditions that it intersects.
This requirement ensures that regardless of the specific exit condition \(\mdpState \in \controllerFinalStateSet_{\controller}\) in which subsystem \(\controller\) terminates, \(\mdpState\) will be a valid entry condition for the \textit{same} collection of other subsystems.

\subsection{Compositions of Reinforcement Learning Subsystems}
\label{sec:problem_statement_meta_policies}
Compositions of subsystems are specified by \textit{meta-policies} \(\abstractPolicy\), which are formally defined in terms of the high-level models that we introduce in \S \ref{sec:hlm}. 
Intuitively, after reaching the exit condition of any given subsystem, the meta-policy selects which subsystem to execute next.
So, execution of the composite system occurs as follows.
From an initial state \(\mdpState\), the meta-policy is used to select a subsystem \(\controller\) to execute, from the finite collection of \(\numControllers\) subsystems \(\controllerSet = \{\controller_{1}, \ldots, \controller_{\numControllers}\}\). 
The subsystem's policy \(\policy_{\controller}\) is then followed until it either reaches an exit condition \(\mdpState' \in \controllerFinalStateSet_{\controller}\), or it reaches the end of its time horizon \(\timeHorizon_{\controller}\). 
If the former is true, the meta-policy selects the next subsystem to execute from \(\mdpState'\), and the process repeats.
Conversely, if the latter is true, the subsystem has failed to complete its subtask in time, and the execution of the meta-policy stops.
In the labyrinth example, the meta-policy selects which rooms to pass through, while the subsystems' policies navigate the individual rooms.
We note that although the environment may be partially observable, the \textit{abstract states} that are relevant to the meta-policy's decision-making are always fully observable.
This follows from our definition of composable subsystems, so long as the subsystems all complete their subtasks.
We define the notion of abstract states and elaborate on this point further in \S \ref{sec:hlm}.

The \textit{task} of the composite system is, beginning from an initial state \(\mdpInitialState\), to eventually reach a particular target exit condition \(\controllerFinalStateSet_{targ} \subseteq \mdpStateSet\). 
We assume that  \(\controllerFinalStateSet_{targ}\) is equivalent to \(\controllerFinalStateSet_{\controller}\) for at least one of the subsystems. 
That is, there is some subsystem \(\controller \in \controllerSet\) such that \(\controllerFinalStateSet_{targ} = \controllerFinalStateSet_{\controller}\).
Furthermore, to simplify theoretical analysis, we assume that for every \(\controller \in \controllerSet\), either \(\controllerFinalStateSet_{\controller} = \controllerFinalStateSet_{targ}\) or \(\controllerFinalStateSet_{\controller} \cap \controllerFinalStateSet_{targ} = \emptyset\).
This assumption removes ambiguity as to whether or not completion of a given subtask results in the immediate completion of the system's task.
Finally, we assume that at least one subsystem \(\controller\) can be executed from the initial state \(\mdpInitialState\), i.e. there exists a subsystem \(\controller \in \controllerSet\) such that \(\mdpInitialState \in \controllerInitialStateSet_{\controller}\).
We say that the execution of a meta-policy reaches the target set \(\controllerFinalStateSet_{targ}\), when one of the subsystems \(\controller\) with \(\controllerFinalStateSet_{\controller} = \controllerFinalStateSet_{targ}\) is executed, and successfully completes its subtask. 
We denote the probability of eventually reaching the target set from initial state \(\mdpInitialState\) by using meta-policy \(\abstractPolicy\) to compose the subsystem policies \(\policy_{\controller_{1}}, \ldots, \policy_{\controller_{\numControllers}}\), by \(\mathbb{P}_{\mdp}^{\mdpInitialState}(\Diamond \controllerFinalStateSet_{targ} | \abstractPolicy, \policy_{\controller_{1}}, \ldots, \controller_{\numControllers})\).

A \textit{task specification} places a requirement on the probability of the compositional RL system reaching \(\controllerFinalStateSet_{targ}\). That is, for some allowable failure probability \(\hlmFailProb \in [0,1]\), the task specification is satisfied if \(\mathbb{P}_{\mdp}^{\mdpInitialState}(\Diamond \controllerFinalStateSet_{targ} | \abstractPolicy, \policy_{\controller_{1}}, \ldots, \policy_{\controller_{\numControllers}}) \geq 1 - \hlmFailProb\).
With these definitions in place, we now deliver our problem statement.

\textbf{Problem Statement. }\textit{Given an allowable failure probability \(\hlmFailProb \in [0,1]\), an initial state \(\mdpInitialState\), a target set \(\controllerFinalStateSet_{targ}\), and a partially instantiated collection \(\controllerSet = \{\controller_{1}, \ldots, \controller_{\numControllers}\}\) of composable subsystems, learn policies \(\policy_{\controller_{i}}\) for each subsystem \(\controller_{i} \in \controllerSet\) and compute a meta-policy \(\abstractPolicy\) such that \(\mathbb{P}_{\mdp}^{\mdpInitialState}(\Diamond \controllerFinalStateSet_{targ} | \abstractPolicy, \policy_{\controller_{1}}, \ldots, \policy_{\controller_{\numControllers}}) \geq 1 - \hlmFailProb\).}
\section{The High-Level Decision-Making Model}\label{sec:hlm}
We now introduce the high-level model (HLM) of the compositional RL framework, which is used to compute meta-policies, and to decompose task specifications into subtask specifications to be satisfied by the individual subsystems.

\subsection{Defining the High-Level Model (HLM)}
To construct the HLM, we use a given collection \(\controllerSet = \{\controller_1, \controller_2, \ldots, \controller_{\numControllers}\}\) of partially instantiated subsystems, an initial state \(\mdpInitialState\), and a target set \(\controllerFinalStateSet_{targ}\).
We begin by defining a state abstraction, which groups together environment states in order to define the state space of the HLM.
To do so, we define the equivalence relation \(\eqRelation \subseteq \mdpStateSet \times \mdpStateSet\) such that \((\mdpState, \mdpState') \in \eqRelation\) if and only if the following two conditions hold.
\begin{align*}
	\begin{array}{l}
		 \text{1. For every \(\controller \in \controllerSet, \mdpState \in \controllerInitialStateSet_{\controller}\) if and only if \(\mdpState' \in \controllerInitialStateSet_{\controller}\), and}, \\
		 \text{2. \(\mdpState \in \controllerFinalStateSet_{targ}\) if and only if \(\mdpState' \in \controllerFinalStateSet_{targ}\).}
	\end{array}
\end{align*}%
The equivalence class of any state \(\mdpState \in \mdpStateSet\) under equivalence relation \(\eqRelation\) is given by \([\mdpState]_{\eqRelation} = \{\mdpState' \in \mdpStateSet | (\mdpState, \mdpState') \in \eqRelation\}\). 
The quotient set of \(\mdpStateSet\) by \(\eqRelation\) is defined as the set of all equivalence classes \(\mdpStateSet /_{\eqRelation} = \{[\mdpState]_{\eqRelation} | \mdpState\in \mdpStateSet\}\).
Intuitively, this equivalence relation groups together all the states in the target set, and it also groups together states that are entry conditions to the same subset of subsystems.

We may now define the HLM corresponding to the collection of subsystems \(\controllerSet\) by the parametric MDP \(\abstractMDP = (\abstractStateSet, \abstractInitialState, \abstractSuccessState, \abstractFailureState, \controllerSet, \abstractTransition)\).
Here, the high-level states \(\abstractStateSet\) are defined to be \(\mdpStateSet/_{\eqRelation}\); states in the HLM correspond to equivalence classes of environment states.
The initial state \(\abstractInitialState\) of the HLM is defined as \(\abstractInitialState = [\mdpInitialState]_{\eqRelation}\), the equivalence class of the environment's initial state. 
The \textit{goal state} \(\abstractSuccessState \in \abstractStateSet\) is similarly defined as \([\mdpState]_{\eqRelation}\) such that \(\mdpState \in \controllerFinalStateSet_{targ}\). 
Recall that \(\controllerFinalStateSet_{targ} = \controllerFinalStateSet_{\controller}\) for at least one of the subsystems \(\controller \in \controllerSet\).
Finally, the \textit{failure state} \(\abstractFailureState \in \abstractStateSet\) is defined as \([\mdpState]_{\eqRelation}\) such that \(\mdpState \in \mdpStateSet \setminus [\bigcup_{\controller \in \controllerSet} \controllerInitialStateSet_{\controller}] \cup \controllerFinalStateSet_{targ}\), i.e., the equivalence class of states \textit{not} belonging to the initial states of any component, or to the target set.

As an example, Figure \ref{fig:running_example} illustrates the HLM corresponding to the collection of subsystems from labyrinth example. 
The overlapping entry and exit conditions, represented by the blue circles in Figure \ref{fig:running_example}, define the states of the HLM. 
The target set \(\controllerFinalStateSet_{targ}\) defines the HLM's goal state \(\abstractSuccessState\), and all other environment states are absorbed into the failure state \(\abstractFailureState\).

The collection of subsystems \(\controllerSet\) defines the HLM's set of actions. By definition of the equivalence relation \(\eqRelation\), for every HLM state \(\abstractState \in \abstractStateSet\) there is a well-defined subset of the subsystems \(\controllerSet(\abstractState) \subseteq \controllerSet\) that can be executed.
That is, for every environment state \(\mdpState \in \abstractState\), \(\mdpState \in \controllerInitialStateSet_{\controller}\) for all \(\controller \in \controllerSet(\abstractState)\).
We accordingly define \(\controllerSet(\abstractState)\) as the set of \textit{available subsystems} at high-level state \(\abstractState\).

Furthermore, consider any subsystem \(\controller \in \controllerSet(\abstractState)\). 
As a direct result of the definition of equivalence relation \(\eqRelation\) and of the subsystems in collection \(\controllerSet\) being composable, every state \(\mdpState\) within set \(\controllerFinalStateSet_{\controller}\) belongs to the \textit{same} equivalence class \([\mdpState]_{\eqRelation}\).
In other words, we may uniquely define the successor HLM state of any component \(\controller \in \controllerSet\) as \(succ(\controller) = [\mdpState]_{\eqRelation}\) such that \(\mdpState \in \controllerFinalStateSet_{\controller}\). We then construct the HLM transition probability function in terms of parameters \(\bernoulliProb_{\controller} \in [0,1]\) as follows.%
\begin{align*}
\abstractTransition(\abstractState, \controller, \abstractState') = \begin{cases} 
  \bernoulliProb_{\controller}, & if\; \; \controller \in \controllerSet(\abstractState), \;\; \abstractState' = succ(\controller) \\
  1 - \bernoulliProb_{\controller}, & if \; \; \controller\in \controllerSet(\abstractState), \; \; \abstractState' = \abstractFailureState\\
  0, & \text{Otherwise}
\end{cases}\end{align*}%
The interpretation of this definition of \(\abstractTransition\) is as follows. 
After selecting component \(\controller \in \controllerSet(\abstractState)\) from HLM state \(\abstractState\), the component either succeeds in reaching an exit condition  \(\mdpState \in \controllerFinalStateSet_{\controller}\) within its time horizon \(\timeHorizon_{\controller}\) with probability \(\bernoulliProb_{\controller}\), resulting in an HLM transition to \(succ(\controller)\), or it fails to do so with probability \(1 - \bernoulliProb_{\controller}\), resulting in a transition to the HLM failure state \(\abstractFailureState\). 

The parameters \(\bernoulliProb_{\controller}\) may thus be interpreted as estimates of the probabilities that the subsystems complete their subtasks, given they are executed from one of their entry conditions.
Their values come either from empirical rollouts of learned subsystem policies \(\policy_{\controller}\), or as the solution to the aforementioned automatic decomposition of the task specification, which is discussed further below.

\subsection{Relating the HLM to Compositions of RL Subsystems}
We note that while the parameters \(\bernoulliProb_{\controller}\) are meant to estimate the probabilities of successful subtask completion, they cannot capture these probabilities exactly.
In reality, while parameter \(\bernoulliProb_{\controller}\) is constant, it's possible for this probability to vary, given the entry condition \(\mdpState \in \controllerInitialStateSet_{\controller}\) from which the component is executed.
However, the simplicity of the presented parametrization of \(\abstractTransition\) enables tractable solutions to planning and verification problems in \(\abstractMDP\).
Furthermore, by establishing relationships between policies in \(\abstractMDP\), and meta-policies composing RL subsystems, the HLM becomes practically useful in the analysis of composite RL systems.

Towards this idea, we recall that each state \(\abstractState \in \abstractStateSet\) of the HLM uniquely defines a set of available subsystems.
As a result, any stationary policy \(\hlmPolicy : \abstractStateSet \times \controllerSet \to [0,1]\) may be used to define a meta-policy that composes RL subsystems, as described in \S \ref{sec:problem_statement_meta_policies}.
Thus, solutions to planning problems in \(\abstractMDP\) can be used directly as meta-policies to specify compositions of the RL subsystems. 
Of particular interest is the problem of computing an HLM policy \(\hlmPolicy\) that maximizes \(\mathbb{P}^{\abstractInitialState}_{\abstractMDP}(\Diamond \abstractSuccessState | \hlmPolicy)\), i.e., the probability of eventually reaching the goal state \(\abstractSuccessState\) from the HLM's initial state \(\abstractInitialState\).
Theorem \ref{thm:hlm_bounds_true_performance} relates this probability value to the probability of the corresponding compositional system completing its task, \(\mathbb{P}_{\mdp}^{\mdpInitialState}(\Diamond \controllerFinalStateSet_{targ} | \abstractPolicy, \policy_{\controller_{1}, \ldots, \policy_{\controller_{\numControllers}}})\), in the task environment \(\mdp\).%
\begin{theorem}
\label{thm:hlm_bounds_true_performance}
Let \(\controllerSet = \{\controller_1, \controller_2, ..., \controller_{\numControllers}\}\) be a collection of composable subsystems with respect to initial state \(\mdpInitialState\) and target set \(\controllerFinalStateSet_{targ}\) within the environment POMDP \(\mdp\). Define \(\abstractMDP\) to be the corresponding HLM and let \(\hlmPolicy\) be a policy in \(\abstractMDP\). 
If, for every subsystem \(\controller_{i} \in \controllerSet\) and for every entry condition \(\mdpState \in \controllerInitialStateSet_{\controller_{i}}\), 
\(\mathbb{P}^{\mdpState}_{\mdp}(\Diamond_{\leq \timeHorizon_{\controller_{i}}}\controllerFinalStateSet_{\controller_{i}} | \policy_{\controller_{i}}) \geq \bernoulliProb_{\controller_{i}}\)
, then 
\[\mathbb{P}^{\mdpInitialState}_{\mdp}(\Diamond \controllerFinalStateSet_{targ} | \abstractPolicy, \policy_{\controller_{1}}, \ldots, \policy_{\controller_{\numControllers}}) \geq \mathbb{P}^{\abstractInitialState}_{\abstractMDP}(\Diamond \abstractSuccessState | \hlmPolicy).\]
\end{theorem}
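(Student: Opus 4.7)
The plan is to establish the inequality by comparing value functions in $\abstractMDP$ and $\mdp$ through an induction on a bounded horizon of meta-decisions. Define the finite-horizon HLM value $V^{\hlmPolicy}_{n}(\abstractState) \defeq \mathbb{P}^{\abstractState}_{\abstractMDP}(\Diamond_{\leq n} \abstractSuccessState \mid \hlmPolicy)$, giving the probability of reaching $\abstractSuccessState$ from $\abstractState$ in at most $n$ meta-steps; this converges pointwise to $\mathbb{P}^{\abstractState}_{\abstractMDP}(\Diamond \abstractSuccessState \mid \hlmPolicy)$ as $n \to \infty$. On the environment side, let $V_{\mdp}(\mdpState) \defeq \mathbb{P}^{\mdpState}_{\mdp}(\Diamond \controllerFinalStateSet_{targ} \mid \hlmPolicy, \policy_{\controller_1}, \ldots, \policy_{\controller_{\numControllers}})$ denote the probability that the compositional procedure, started from environment state $\mdpState$ and with $\hlmPolicy$ selecting the next subsystem at each completed equivalence class, eventually reaches $\controllerFinalStateSet_{targ}$. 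I would prove by induction on $n$ the strengthened claim that $V_{\mdp}(\mdpState) \geq V^{\hlmPolicy}_{n}([\mdpState]_{\eqRelation})$ for every $\mdpState \in \mdpStateSet$ and every $n \geq 0$; the theorem then follows by letting $n \to \infty$ and evaluating at $\mdpState = \mdpInitialState$.

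The inductive step rests on three facts. First, because $\controllerSet$ is composable, every exit state $\mdpState' \in \controllerFinalStateSet_{\controller}$ lies in the single equivalence class $succ(\controller)$, so the HLM transition structure faithfully tracks where the compositional system may land after any completed subtask. Second, the hypothesis $\successProb^{\controller}_{\policy_{\controller}}(\mdpState) \geq \bernoulliProb_{\controller}$ for every $\mdpState \in \controllerInitialStateSet_{\controller}$ lower-bounds the true subtask-completion probability by the HLM transition weight, uniformly over the entry condition actually occupied. Third, the inductive hypothesis gives $V_{\mdp}(\mdpState') \geq V^{\hlmPolicy}_{n}(succ(\controller))$ at every such exit state. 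Combining these through a one-step unrolling, for $\mdpState$ whose class is non-absorbing,
\[
V_{\mdp}(\mdpState) = \sum_{\controller} \hlmPolicy([\mdpState]_{\eqRelation}, \controller) \sum_{\mdpState' \in \controllerFinalStateSet_{\controller}} \mathbb{P}(\controller \text{ exits at } \mdpState' \mid \mdpState) \, V_{\mdp}(\mdpState'),
\]
and using $\sum_{\mdpState' \in \controllerFinalStateSet_{\controller}} \mathbb{P}(\controller \text{ exits at } \mdpState' \mid \mdpState) = \successProb^{\controller}_{\policy_{\controller}}(\mdpState) \geq \bernoulliProb_{\controller}$ together with $V^{\hlmPolicy}_{n}(succ(\controller)) \geq 0$ gives
\[
V_{\mdp}(\mdpState) \geq \sum_{\controller} \hlmPolicy([\mdpState]_{\eqRelation}, \controller) \cdot \bernoulliProb_{\controller} \cdot V^{\hlmPolicy}_{n}(succ(\controller)) = V^{\hlmPolicy}_{n+1}([\mdpState]_{\eqRelation}).
\]
The base case $n = 0$ and the absorbing cases are immediate: $[\mdpState]_{\eqRelation} = \abstractSuccessState$ forces $\mdpState \in \controllerFinalStateSet_{targ}$ and both sides equal $1$, while $[\mdpState]_{\eqRelation} = \abstractFailureState$ gives $V^{\hlmPolicy}_{n}(\abstractFailureState) = 0 \leq V_{\mdp}(\mdpState)$.

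The main obstacle is notational rather than conceptual: the HLM policy $\hlmPolicy$ is allowed to depend on the history of abstract states, and the subsystem policies $\policy_{\controller}$ depend on observation histories in the POMDP $\mdp$, so the one-step unrolling must really be carried out on history-indexed value functions. I would handle this by carrying the sequence of completed abstract states as an index in both $V_{\mdp}$ and $V^{\hlmPolicy}_n$ and proving the inequality for every history simultaneously; since a single completed subtask extends the abstract history identically on both sides, the recursion closes unchanged. Two smaller points also require verification: that reaching $\controllerFinalStateSet_{targ}$ in $\mdp$ corresponds to reaching $\abstractSuccessState$ in $\abstractMDP$ along matched trajectories, which uses the standing assumption that $\controllerFinalStateSet_{\controller}$ is either equal to or disjoint from $\controllerFinalStateSet_{targ}$ for every $\controller$, and that subsystem failure in $\mdp$ halts the compositional execution (as the paper specifies), matching the absorbing behavior of $\abstractFailureState$ in $\abstractMDP$.
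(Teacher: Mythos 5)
Your proposal is correct, but it proves the theorem by a genuinely different route than the paper. The paper argues at the level of trajectories: it enumerates the finite path fragments \(\abstractState_0 \controller_0 \ldots \abstractState_{\numMetaDecision}\) that reach \(\abstractSuccessState\) in \(\abstractMDP\), associates to each a set of \emph{consistent} environment path fragments in \(\mdp\), and proves a lemma (by forward induction on the prefix length) that the probability mass of each consistent set dominates the probability of the corresponding HLM path fragment; the theorem then follows by summing over goal-reaching HLM paths and checking that the consistent sets are pairwise disjoint and all contained in the target-reaching event. You instead run a backward induction on the number of meta-decisions, comparing the finite-horizon HLM reachability value \(V^{\hlmPolicy}_{n}\) against the environment reachability value \(V_{\mdp}\) via a one-step Bellman unrolling, and pass to the limit \(n \to \infty\). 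Both arguments hinge on the same two facts---composability forces every exit state of \(\controller\) into the single class \(succ(\controller)\), and the hypothesis lower-bounds the exit probability by \(\bernoulliProb_{\controller}\) uniformly over entry conditions---and both must address the same bookkeeping point, namely that each subsystem's policy conditions only on the observation history accumulated since its own activation, so the recursion closes at switching times. Your value-iteration formulation is the more compact of the two: it avoids constructing the consistent-path sets and verifying their disjointness, and it extends more gracefully to uncountable state spaces where the paper's sum over path fragments must be read informally (your inner sum over exit states likewise becomes an integral there). The paper's path-based argument, in exchange, makes the trajectory-level correspondence between HLM runs and environment runs explicit, which is the picture the main text uses to explain the result. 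One small point worth making explicit if you write this up: the one-step unrolling of \(V_{\mdp}\) should be justified as an equality (or at least a lower bound) by appealing to the execution semantics, i.e., the composite system reaches \(\controllerFinalStateSet_{targ}\) only by a chain of completed subtasks terminating in a subsystem with \(\controllerFinalStateSet_{\controller} = \controllerFinalStateSet_{targ}\), and a timed-out subsystem halts the execution---which matches the absorbing failure state exactly as you note.
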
%
In words, Theorem \ref{thm:hlm_bounds_true_performance} says that if each subsystem's probability of success is lower bounded by \(\bernoulliProb_{\controller}\), then under any given meta-policy, the probability of completing the task in the true environment is lower-bounded by the probability of reaching the goal state in the HLM.

We provide a full proof of this result in Appendix \ref{sec:thm1_proof}.
However, a sketch of the procedure used to prove the result is summarized as follows.
We begin by expressing \(\mathbb{P}_{\abstractMDP}^{\abstractInitialState}(\Diamond \abstractSuccessState | \hlmPolicy)\) as the sum of the probabilities of every sequence \(\abstractState_0 \controller_0 \abstractState_1 \controller_1 ... \abstractState_{\numMetaDecision}\) that eventually reaches the goal state \(\abstractSuccessState\) in the HLM \(\abstractMDP\). 
Then, we show that for each such sequence there exists a corresponding set of \textit{consistent} sequences \(\mdpState_0 \hat{\controller}_0 \mdpAction_0 \mdpState_1 \hat{\controller}_1 \mdpAction_1 ... \mdpState_{\numTimeStep}\) of states, subsystems, and actions in the environment POMDP \(\mdp\). 
Furthermore, the probability value \(\mathbb{P}_{\mdp}^{\abstractInitialState}(\cdot | \abstractPolicy, \policy_{\controller_{1}}, \ldots, \policy_{\controller_{\numControllers}})\) of each such set of consistent sequences in \(\mdp\) is higher than the probability value \(\mathbb{P}_{\abstractMDP}^{\abstractInitialState}(\abstractState_0 \controller_0 \abstractState_1 \controller_1 ... \abstractState_{\numMetaDecision} | \hlmPolicy)\) of the original sequence in the HLM \(\abstractMDP\).
Finally, we note that every such sequence \\ \(\mdpState_0 \hat{\controller}_0 \mdpAction_0 \mdpState_1 \hat{\controller}_1 \mdpAction_1 ... \mdpState_{\numTimeStep}\) of environment states, subsystems, and actions eventually reaches the target set \(\controllerFinalStateSet_{targ}\) and that the aforementioned sets of these sequences are pairwise disjoint. 
From these points, we are able to conclude the result of the theorem.

As a practical demonstration of the theorem's application, consider the labyrinth task from and its corresponding HLM, illustrated in Figure \ref{fig:running_example}.
Suppose the HLM's parameters \(\bernoulliProb_{\controller}\) are specified such that they lower bound the true probabilities of subtask success, i.e. the transition probabilities in the HLM lower bound the probabilities of the subsystems successfully navigating their respective rooms in the labyrinth in Figure \ref{fig:running_example}. 
By planning a policy \(\hlmPolicy\) in the HLM that, for example, reaches \(\abstractSuccessState\) with probability \(0.95\), we ensure that the corresponding composition of the subsystems will reach \(\controllerFinalStateSet_{targ}\) in the labyrinth with a probability of \textit{at least} \(0.95\).

\subsection{Automatic Decomposition of Task Specifications}
Recall that our objective is not only to compute a meta-policy \(\abstractPolicy\), but also to \textit{learn} the subsystem policies \(\policy_{\controller_1}, \policy_{\controller_2},..., \policy_{\controller_{\numControllers}}\) that this meta-policy will execute, such that the system's task specification \(\mathbb{P}^{\mdpInitialState}_{\mdp}(\Diamond \controllerFinalStateSet_{targ}| \abstractPolicy, \policy_{\controller_{1}}, \ldots, \policy_{\controller_{\numControllers}}) \geq 1 - \hlmFailProb\) is satisfied.
Suppose that we choose a set of HLM parameters \(\{\bernoulliProb_{\controller_1}, \bernoulliProb_{\controller_2}, ..., \bernoulliProb_{\controller_{\controller_{\numControllers}}}\}\) such that a policy \(\hlmPolicy\) in the HLM exists with \(\mathbb{P}^{\abstractInitialState}_{\abstractMDP}(\Diamond \abstractSuccessState | \hlmPolicy) \geq 1 - \hlmFailProb\).
Then, so long as each of the corresponding subsystems \(\controller\) are able to learn a policy \(\policy_{\controller}\) such that \(\successProb_{\policy_{\controller}}^{\controller}(\mdpState) \geq \bernoulliProb_{\controller}\) for every \(\mdpState \in \controllerInitialStateSet_{\controller}\), Theorem \ref{thm:hlm_bounds_true_performance} tells us that the meta-policy will satisfy the task specification.

We may thus interpret the values of parameters \(\bernoulliProb_{\controller}\) as \textit{subtask specifications}.
Each subsystem must achieve one of its exit conditions \(\mdpState' \in \controllerFinalStateSet_{\controller}\) within its allowed time horizon \(\timeHorizon_{\controller}\) with a probability of at least \(\bernoulliProb_{\controller}\), given its execution began from some entry condition \(\mdpState \in \controllerInitialStateSet_{\controller}\).
With this interpretation in mind, we take the following approach to the decomposition of the task specification: find the smallest values of parameters \(\bernoulliProb_{\controller_1}, \bernoulliProb_{\controller_2}, ..., \bernoulliProb_{\controller_{\numControllers}}\) such that an HLM policy \(\hlmPolicy\) exists satisfying \(\mathbb{P}^{\abstractInitialState}_{\abstractMDP}(\Diamond \abstractSuccessState | \hlmPolicy) \geq 1 - \hlmFailProb\).
We formulate this constrained parameter optimization problem as the bilinear program given in equations (\ref{eq:hlm_opt_objective})-(\ref{eq:hlm_opt_task_sat_constraints}).
In (\ref{eq:hlm_opt_dynamics_constraints}) and (\ref{eq:hlm_opt_task_sat_constraints}), we define \(pred(\abstractState) \defeq \{(\abstractState', \controller') | \controller' \in \controllerSet(\abstractState') \; and \; \abstractState = succ(\controller')\}\).%
\begin{align}
    \min_{\occupancyVar, \bernoulliProb_{\controller}} \quad & \sum_{\controller \in \controllerSet} \bernoulliProb_{\controller}\label{eq:hlm_opt_objective} \\
    \textrm{s.t.} \quad & \sum_{\controller \in \controllerSet(\abstractState)} \occupancyVar(\abstractState, \controller) = \delta_{\abstractInitialState}(\abstractState) + \sum_{(\abstractState', \controller') \in pred(\abstractState)} \occupancyVar(\abstractState', \controller') \bernoulliProb_{\controller'}, \quad \forall \abstractState \in  \abstractStateSet \setminus \{\abstractFailureState, \abstractSuccessState\} \label{eq:hlm_opt_dynamics_constraints} \\
    & \occupancyVar(\abstractState, \controller) \geq 0, \; \forall \abstractState \in \abstractStateSet \setminus \{\abstractFailureState, \abstractSuccessState\},\; \forall \controller \in \controllerSet(\abstractState) \\
    & 0 \leq \bernoulliProb_{\controller} \leq 1, \; \forall \controller \in \controllerSet \label{eq:hlm_opt_pc_0_1_constraints} \\
    & \sum_{(\abstractState', \controller') \in pred(\abstractSuccessState)} \occupancyVar(\abstractState', \controller') \bernoulliProb_{\controller'} \geq 1 - \hlmFailProb
    \label{eq:hlm_opt_task_sat_constraints}
\end{align}%
The decision variables in (\ref{eq:hlm_opt_objective})-(\ref{eq:hlm_opt_task_sat_constraints}) are the HLM parameters \(\bernoulliProb_{\controller}\) for every \(\controller \in \controllerSet\), and \(\occupancyVar(\abstractState, \controller)\) for every \(\abstractState \in \abstractStateSet \setminus \{\abstractFailureState, \abstractSuccessState\}\).
The value of \(\delta_{\abstractInitialState}(\abstractState)\) is $1$ if \(\abstractState=\abstractInitialState\) and $0$ otherwise.
The constraint~\eqref{eq:hlm_opt_dynamics_constraints} is the so-called Bellman-flow constraint; it ensures that the variable \(\occupancyVar(\abstractState, \controller)\) defines the expected number of times subsystem \(\controller\) is executed in state \(\abstractState\).
Given the values of the variables \(\occupancyVar(\abstractState, \controller)\) for every \(\abstractState \in \abstractStateSet \setminus \{\abstractFailureState, \abstractSuccessState\}\) and \(\controller \in \controllerSet(\abstractState)\), we may define the corresponding HLM policy as \(\hlmPolicy(\abstractState, \controller) \defeq \frac{\occupancyVar(\abstractState, \controller)}{\sum_{\controller' \in \controllerSet(\abstractState)} \occupancyVar(\abstractState, \controller')}\) \cite{puterman2014markov}.
The constraint \eqref{eq:hlm_opt_task_sat_constraints} enforces the HLM policy \(\hlmPolicy\)'s satisfaction of \(\mathbb{P}^{\abstractInitialState}_{\abstractMDP}(\Diamond \abstractSuccessState | \hlmPolicy) \geq 1 - \hlmFailProb\).
We refer to~\shortciteA{etessami2007multi} and~\citeA{puterman2014markov} for further details on these variables and the constraints.
\section{Iterative Compositional Reinforcement Learning (ICRL)}
\label{sec:icrl}

In this section, we discuss how subsystem policies are learned to satisfy the subtask specifications discussed above, and we present how the bilinear program given in (\ref{eq:hlm_opt_objective})-(\ref{eq:hlm_opt_task_sat_constraints}) is modified to refine the subtask specifications, after some training of the subsystems has been completed.

\subsection{Learning and Verifying Subsystem Policies}
Let \(\bernoulliProb_{\controller_1}, \bernoulliProb_{\controller_2}, ..., \bernoulliProb_{\controller_{\numControllers}}\) be the parameter values output as a solution to problem (\ref{eq:hlm_opt_objective})-(\ref{eq:hlm_opt_task_sat_constraints}). We want each subsystem \(\controller\) to learn a policy \(\policy_{\controller}\) satisfying the subtask specification:  \(\successProb_{\policy_{\controller}}^{\controller}(\mdpState) \geq \bernoulliProb_{\controller}\) for each entry condition \(\mdpState \in \controllerInitialStateSet_{\controller}\) of the subsystem. 
We note that any RL algorithm and reward function may be used, so long as the resulting learned policy can be verified to satisfy its subtask specification.
A particularly simple candidate reward function \(\mdpRewardFunction_{\controller}\) outputs \(1\) when an exit condition \(\mdpState \in \controllerFinalStateSet_{\controller}\) is first reached, and outputs \(0\) otherwise. 
Under this reward function, we have  \(\successProb_{\policy_{\controller}}^{\controller}(\mdpState) = \mathbb{E}[\sum_{t\in[\timeHorizon_{\controller}]}\mdpRewardFunction_{\controller}(\mdpState_t) | \policy_{\controller}, \mdpState_0 = \mdpState]\). We can maximize the probability of reaching an exit condition by maximizing the expected undiscounted sum of rewards.

To verify that a learned subsystem policy \(\policy_{\controller}\) satisfies its subtask specification, we consider \(\controllerInfProb_{\controller} = \inf \{ \successProb_{\policy_{\controller}}^{\controller}(\mdpState)|\mdpState \in \controllerInitialStateSet_{\controller}\}\), the greatest lower bound of the policy's probability of subtask succcess, beginning from any of the subsystem's entry conditions. So long as \(\controllerInfProb_{\controller} \geq \bernoulliProb_{\controller}\), the subtask specification is satisfied.
In practice, the value of \(\controllerInfProb_{\controller}\) cannot be known exactly, but we may obtain an estimate \(\controllerPerformanceEstimate_{\controller}\) of its value through empirical rollouts of \(\policy_{\controller}\), beginning from the different entry conditions \(\mdpState \in \controllerInitialStateSet_{\controller}\).
We refer to \(\controllerPerformanceEstimate_{\controller}\) as the \textit{estimated performance value} of policy \(\policy_{\controller}\).

\subsection{Automatic Refinement of the Subtask Specifications}
The estimated performance values \(\controllerPerformanceEstimate_{\controller}\) are useful not only for the empirical verification of the learned policies, but also as additional information used periodically during training to refine the subtask specifications. To do so, we re-solve the optimization problem (\ref{eq:hlm_opt_objective})-(\ref{eq:hlm_opt_task_sat_constraints}), with a modified objective (\ref{eq:hlm_opt_modified_obj}), and additional constraints \eqref{eq:hlm_opt_lb_constraints}-\eqref{eq:hlm_opt_ub_constraints}.%
\begin{align}
    & obj(\lbList) = \sum_{\controller \in \controllerSet} (\bernoulliProb_{\controller} - \controllerPerformanceEstimate_{\controller})
    \label{eq:hlm_opt_modified_obj}\\
    & LBConst(\lbList) = \{\bernoulliProb_{\controller} \geq \controllerPerformanceEstimate_{\controller} | \forall \controllerPerformanceEstimate_{\controller} \in \lbList\}
    \label{eq:hlm_opt_lb_constraints}\\
    & UBConst(\ubList) = \{\bernoulliProb_{\controller} \leq \controllerPerformanceEstimate_{\controller} | \forall \controllerPerformanceEstimate_{\controller} \in \ubList\}
    \label{eq:hlm_opt_ub_constraints}
\end{align}
Here, we assume that the subsystems have learned policies \(\policy_{\controller_1}, \policy_{\controller_2}, ..., \policy_{\controller_{\numControllers}}\). Let \(\lbList = \{\controllerPerformanceEstimate_{\controller_1}, \controllerPerformanceEstimate_{\controller_2}, ..., \controllerPerformanceEstimate_{\controller_{\numControllers}}\}\) be the set of the corresponding estimated performance values.
The objective function (\ref{eq:hlm_opt_modified_obj}) minimizes the performance gap between the subtask specifications \(\bernoulliProb_{\controller}\) and the current estimated performance values \(\controllerPerformanceEstimate_{\controller}\).
The rationale behind the additional constraints defined by \(LBConst(\lbList)\) is as follows: the subsystems have already learned policies achieving probabilities of subtask success greater than the estimated performance values \(\controllerPerformanceEstimate_{\controller}\), and so there is no reason to consider subtask specifications \(\bernoulliProb_{\controller}\) that are below these values.

\begin{algorithm}[t]
    \DontPrintSemicolon 
    \KwIn{ Partially instantiated subsystems \(\controllerSet = \{\controller_1, \controller_2, ..., \controller_{\numControllers}\}\), \(\hlmFailProb\), \(\trainingSteps\), \(\maxTrainingSteps\).}
    \KwOut{Subsystem policies \(\{\policy_{\controller_1}, ..., \policy_{\controller_{\numControllers}}\}\), meta-policy \(\abstractPolicy\), success probability \(\controllerPerformanceEstimate_{\abstractPolicy}\).}
    
    \(\abstractMDP \gets ConstructHLM(\controllerSet)\)\;
    
    % \(\{\policy_{\controller_1}, \policy_{\controller_2}, ..., \policy_{\controller_{\numControllers}}\} \gets instantiateSubSystemPolicies()\)\; 
    
    % \(\abstractPolicy \gets instantiateMetaPolicy()\)\;

    \(\controllerPerformanceEstimate_{\controller_1}, \controllerPerformanceEstimate_{\controller_2}, ..., \controllerPerformanceEstimate_{\controller_{\numControllers}}, \controllerPerformanceEstimate_{\abstractPolicy} \gets 0\)\;
    \(\numSteps_{\controller_1}, \numSteps_{\controller_2}, ..., \numSteps_{\controller_{\numControllers}} \gets 0\)\;
    
    \(\lbList \gets \{\controllerPerformanceEstimate_{\controller_1}, \controllerPerformanceEstimate_{\controller_2}, ..., \controllerPerformanceEstimate_{\controller_{\numControllers}}\}\)\; 
    \(\ubList \gets \{\}\)\;
    
    \While{\(\controllerPerformanceEstimate_{\abstractPolicy} \leq 1 - \delta\)}{
    % \textbf{and} \(\numSteps_{total} \leq \numSteps_{max}^{total}\)}{
    
        \If{~\eqref{eq:hlm_opt_objective}-\eqref{eq:hlm_opt_ub_constraints} {\normalfont infeasible}}{
            \Return{{\normalfont Problem is infeasible.}}
        }
        
        \(\{\bernoulliProb_{\controller_1}, \ldots, \bernoulliProb_{\controller_{\numControllers}}\} \gets\)
        \text{Solve~\eqref{eq:hlm_opt_objective}-\eqref{eq:hlm_opt_ub_constraints} using} \((\abstractMDP, \lbList, \ubList)\)\;
        
        \(\controller_{j} \gets selectSubSystem(\bernoulliProb_{\controller_1}, \ldots, \bernoulliProb_{\controller_{\numControllers}}, \controllerPerformanceEstimate_{\controller_1},\ldots, \controllerPerformanceEstimate_{\controller_{\numControllers}})\)\;
    
        \(\policy_{\controller_{j}} \gets RLTrain(\controller_j, \policy_{\controller_j}, \trainingSteps)\)\;
        \(\numSteps_{\controller_j} \gets \numSteps_{\controller_j} + \trainingSteps\)\;
        
        % \(\numSteps_{total} \gets \numSteps_{total} + \trainingSteps\)\;
        
        \(\controllerPerformanceEstimate_{\controller_j} \gets estimateSubTaskSuccessProb(\controller_j, \policy_{\controller_j})\)\;
        
        \(\lbList.update(\controllerPerformanceEstimate_{\controller_j})\)\;
        
        \If{\(\numSteps_{\controller_j} \geq \maxTrainingSteps\)}{\(\ubList.add(\controllerPerformanceEstimate_{\controller_j})\)\;}
        
        \(\abstractPolicy \gets solveOptimalHLMPolicy(\abstractMDP, \lbList)\)\;
        
        \(\controllerPerformanceEstimate_{\abstractPolicy} \gets predictTaskSuccessProbability(\abstractMDP, \abstractPolicy, \lbList)\)\;
    }
    
    \Return{\(\{\policy_{\controller_1}, \policy_{\controller_2}, ..., \policy_{\controller_{\numControllers}}\}\), \(\abstractPolicy\), \(\controllerPerformanceEstimate_{\abstractPolicy}\)}\;
    
    \caption{Iterative Compositional RL (ICRL)}
    \label{alg:ICRL}
\end{algorithm}

Conversely, if the RL algorithm of a particular subsystem \(\controller\) has \textit{converged} -- i.e. the value of \(\controllerPerformanceEstimate_{\controller}\) will no longer increase with additional training steps -- we add the constraint \(\bernoulliProb_{\controller} \leq \controllerPerformanceEstimate_{\controller}\). 
This ensures that solutions to the optimization problem will \textit{not} yield a subtask specification \(\bernoulliProb_{\controller}\) that is larger than what the subsystem can realistically achieve. 
In practice, as a proxy to convergence, we allow each subsystem a maximum budget of \(\maxTrainingSteps\) training steps. 
Once any subsystem \(\controller\) has exceeded this training budget, we append \(\controllerPerformanceEstimate_{\controller}\) to the set \(\ubList\), which is used to define \(UBConst(\ubList)\) in (\ref{eq:hlm_opt_ub_constraints}).

\subsection{Iterative Compositional Reinforcement Learning (ICRL)}
By alternating between the training of the subsystems and the refinement of the subtask specifications, we obtain Algorithm \ref{alg:ICRL}.
In lines \(1\) to \(3\), the HLM is constructed from the collection of partially instantiated subsystems \(\controllerSet\) and the subsystem policies are initialized.
The while loop in lines \(4\) to \(12\) is the main loop controlling the subtask specifications and training of the subsystems.
In line \(5\), the bilinear program \eqref{eq:hlm_opt_objective}-\eqref{eq:hlm_opt_ub_constraints} is solved to update the values of \(\bernoulliProb_{\controller}\). 
These values are used, along with the estimated performance values, to select a subsystem to train.
A simple selection scheme is to choose the subsystem \(\controller_j\) maximizing the current performance gap between \(\bernoulliProb_{\controller_j}\) and \(\controllerPerformanceEstimate_{\controller_j}\).
In line \(7\), the subsystem is trained for \(\trainingSteps\) steps using the RL algorithm of choice.
The subsystem's initial state is sampled uniformly from its entry conditions during training.
Finally, in line \(12\), the HLM \(\abstractMDP\) and the current estimated performance values \(\lbList\) are used to plan a meta-policy \(\abstractPolicy\) maximizing the probability \(\controllerPerformanceEstimate_{\abstractPolicy}\) of reaching the HLM goal state \(\abstractSuccessState\).
This step uses standard MDP algorithms \cite{puterman2014markov}.

We note that the conditions in lines \(4\) and \(5\) ensure that the algorithm only terminates once a meta-policy that satisfies the task specification exists, or the optimization problem~\eqref{eq:hlm_opt_objective}-\eqref{eq:hlm_opt_ub_constraints} has become infeasible.
One of these two outcomes is guarateed to eventually occur.
In particular, by our construction of \(\ubList\) and the corresponding constraints in \eqref{eq:hlm_opt_ub_constraints}, the problem will become infeasible if all of the allotted subsystem training budgets \(\numSteps_{max}\) have been exhausted and a satisfactory meta-policy still does not exist.
In such circumstances the task designer may wish to lower \(\delta\), to increase \(N_{max}\), or to further decompose the task using additional subtasks.
\section{Numerical Case Studies}

\label{sec:experiments}

In this section, we present a suite of numerical case studies that apply the proposed compositional RL approach to a variety of learning tasks.
These case studies highlight the framework's goal-directed decomposition of task specifications, its iterative training of the subsystem policies and refinement of the corresponding subtask specifications, and its ability to aid in task transfer.
The tasks environments include both discrete and continuous state and action spaces, deterministic and stochastic dynamics, and full and partial observability.

We begin in \S \ref{sec:experiment_discrete_labyrinth} by presenting numerical results for a discrete gridworld implementation of the labyrinth task from Figure \ref{fig:running_example}.
To highlight the framework's ability to handle environments with partial observations, in \S \ref{sec:experiments_partial_observability} we present the results of training the compositional RL system when the agent's observations are limited to egocentric image observations of its immediate surroundings.
To help demonstrate the framework's scalability with respect to the complexity of the underlying system dynamics, in \S\ref{sec:experiments_continuous_labyrinth} we present results for a continuous-state and continuous-action version of the labyrinth, whose dynamics are goverened by a rigid-body physics simulator.
Finally, in \S \ref{sec:experiments_modular_task_transfer} we present an example in which subsystem policies learned on one task are transferred to be used as components of a system that is designed to complete entirely new tasks.

All experiments were run locally on a laptop computer with an Intel i9-11900H 2.5 GHz CPU, an NVIDIA RTX 3060 GPU, and with 16 GB of RAM. 
Project code is available at: https://github.com/cyrusneary/verifiable-compositional-rl.

\begin{figure*}[t!]
    \centering
    \begin{subfigure}[t]{0.45\textwidth}
        \centering 
        \input{figures/minigrid_labyrinth}
        \label{fig:supp_discrete_labyrinth_environment}
    \end{subfigure}%
    ~
    \hspace*{\fill}
    \begin{subfigure}[t]{0.45\textwidth}
        \centering 
        \includegraphics[height=4.7cm]{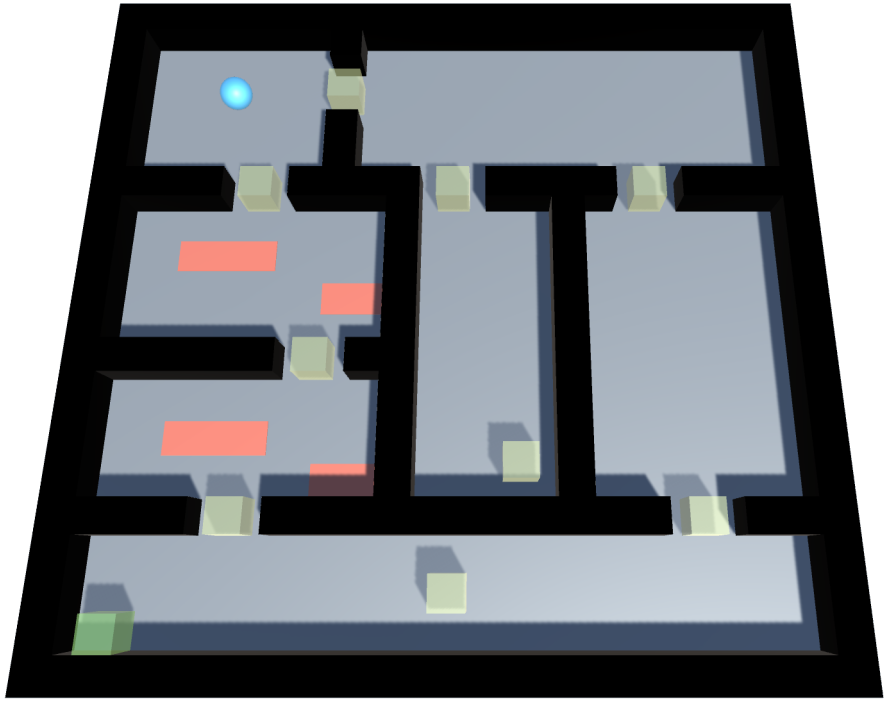}
        \label{fig:supp_continuous_labyrinth_environment}
    \end{subfigure}
    \caption{Left: A discrete version of the labyrinth environment. The egocentric image observations that are available to the agent in the partial observation experiments of \S \ref{sec:experiments_partial_observability} are illustrated to the left of the labyrinth.
    Right: A continuous version of the labyrinth environment. The environment dynamics are simulated using a rigid-body physics engine and the action space is the collection of all force vectors, with magnitude of at most 1, that can be applied to the ball in the horizontal plane.} 
    \label{fig:experiments_labyrinth_environments}
\end{figure*}

\subsection{Experiments on a Discrete Gridworld Labyrinth Environment}
\label{sec:experiment_discrete_labyrinth}
Recall that Figure~\ref{fig:running_example} illustrates the labyrinth environment that we use as a running example throughout this manuscript. 
Each subtask is highlighted with a different color, matching the colors that we use to represent the different subtasks in the following presentation of the numerical results.
Recall that the overall task specification is to safely navigate from the labyrinth's initial state in the top left corner to the goal state marked by a green square in the bottom left corner, while avoiding unsafe states, with a probability of at least \(0.95\). 

\subsubsection{The Discrete Labyrinth Environment}
We implement a discrete version of the gridworld labyrinth environment using MiniGrid \shortcite{gym_minigrid}, as illustrated on the left of Figure \ref{fig:experiments_labyrinth_environments}.
The environment's state space consists of the agent's current position and orientation within the labyrinth, resulting in 1600 total states.
We begin by considering a problem domain with full observability ---at every timestep the agent is observes its current position and orientation. 
There are three available actions in this environment: \textit{turn left}, \textit{turn right}, and \textit{move forward}.
We introduce stochasticity into the environment dynamics by adding a \textit{slip probability}---each action has a \(10\%\) chance of failing and instead causing the result of one of the other actions to occur.

\subsubsection{Implementation of the Compositional Reinforcement Learning System}
\label{sec:experimental_implementation_icrl_algorithm}

To implement Algorithm \ref{alg:ICRL}, we begin by constructing
the HLM from the subtask entry and exit conditions, which are implemented as finite collections of environment states in the discrete labyrinth environment.

During each loop of the algorithm, in order to decompose the task specification and automatically select the next subsystem to train, we solve the bilinear program in~\eqref{eq:hlm_opt_objective}--\eqref{eq:hlm_opt_task_sat_constraints} using \textit{Gurobi}~\cite{gurobi}.
Gurobi transforms the bilinear program into an equivalent mixed-integer linear program, and computes a globally optimal solution to this program by using cutting plane and branch and bound methods.

Each RL subsystem is trained to reach its exit conditions, given that its initial state is sampled uniformly form one of its entry conditions. 
The reward functions used to train the subsystems return \(1.0\) when one of the corresponding exit state(s) have been reached and 0.0 otherwise.
If the subsystem collides with any of the unsafe lava states, then the episode ends and the subsystem has no chance to receive a positive reward.
Alternatively, the episode will end once an exit condition is reached or the episode's time horizon is exhausted.

The subsystem policies are represented by neural networks with two fully-connected hidden layers, each of which has 64 units with hyperbolic tangent activation functions.
The network outputs a categorical distribution over the available actions.
Whenver a particular subsystem is selected by the HLM for training, we train it for \(\trainingSteps = 50,000\) steps using the proximal policy optimization (PPO) algorithm \shortcite{schulman2017proximal} with a discount factor of \(\mdpDiscount = 0.99\).
We use the Stable-Baselines3 implementation of the algorithm \shortcite{stable-baselines3}.
The critic network used by the training algorithm shares an identical structure to the network used to represent the policy, with the exception of its final layer which outputs an estimate of the value function.
We set the training algorithm's other parameters to the recommended default values \shortcite{stable-baselines3}, which are listed in Appendix \ref{sec:appendix_ppo_parameters}.

We note that while we use the PPO algorithm for training, any algorithm could in principle be used to obtain the subsystem policies.
The specifics of the subsystem training procedure do not affect the operation of the ICRL algorithm; the subsystems should be trained using the algorithms that are best suited to their corresponding subtasks.

After training each subsystem policy, we estimate its probability of subtask success \(\controllerPerformanceEstimate_{\controller}\) by rolling out the learned policy \(300\) times from initial states that are sampled uniformly from its entry conditions \(\controllerInitialStateSet_{\controller}\).
Each subsystem is given a maximum allowable training budget of \(N_{max}=500,000\) training steps before its most recent estimated performance value \(\controllerPerformanceEstimate_{\controller}\) is added as an upper bound constraint on the corresponding parameter value \(\bernoulliProb_{\controller}\) in the bilinear program \eqref{eq:hlm_opt_objective}-\eqref{eq:hlm_opt_ub_constraints} (as described in lines 10-11 in Algorithm \ref{alg:ICRL}).

A complete training run for the discrete gridworld labyrinth, which consists of roughly 1,500,000 total training iterations across all subsystems, takes approximately 25 minutes of wall-clock time.
% \color{black}

\begin{figure}[t]%
    \centering
    \begin{subfigure}[t]{0.48\textwidth}
        \centering% This file was created by tikzplotlib v0.9.8.
\begin{tikzpicture}

\definecolor{color0}{rgb}{0.12156862745098,0.466666666666667,0.705882352941177}
\definecolor{color1}{rgb}{0.682352941176471,0.780392156862745,0.909803921568627}
\definecolor{color2}{rgb}{1,0.498039215686275,0.0549019607843137}
\definecolor{color3}{rgb}{1,0.733333333333333,0.470588235294118}
\definecolor{color4}{rgb}{0.172549019607843,0.627450980392157,0.172549019607843}
\definecolor{color5}{rgb}{0.596078431372549,0.874509803921569,0.541176470588235}
\definecolor{color6}{rgb}{0.83921568627451,0.152941176470588,0.156862745098039}
\definecolor{color7}{rgb}{1,0.596078431372549,0.588235294117647}
\definecolor{color8}{rgb}{0.580392156862745,0.403921568627451,0.741176470588235}
\definecolor{color9}{rgb}{0.772549019607843,0.690196078431373,0.835294117647059}
\definecolor{color10}{rgb}{0.549019607843137,0.337254901960784,0.294117647058824}
\definecolor{color11}{rgb}{0.768627450980392,0.611764705882353,0.580392156862745}

\begin{customlegend}[legend columns=3, legend style={align=center, column sep=0.5ex, font=\scriptsize, draw=none, at={(62.0mm, 38.0mm)}}, legend entries={Required, Empirical, HLM-predicted}]
\addlegendimage{line width=2pt, black, dashed}
\addlegendimage{line width=2pt, black, mark=diamond*, mark size=2}
\addlegendimage{line width=2pt, blue!70!white, mark=diamond*, mark size=2}
\end{customlegend}

\begin{axis}[
tick align=inside,
tick pos=left,
x grid style={white!69.0196078431373!black},
  ticklabel style = {font=\footnotesize},
xmajorgrids,
xmin=0, xmax=1450000,
xtick style={color=black},
y grid style={white!69.0196078431373!black},
ymajorgrids,
ymin=-0.05, ymax=1.05,
height=4.5cm,
width=0.99\textwidth,
% width=1.1\textwidth,
ytick={0.0, 0.2, 0.4, 0.6, 0.8, 1.0},
yticklabels={0.0, 0.2, 0.4, 0.6, 0.8, 1.0},
xtick={0, 200000, 400000, 600000, 800000, 1000000, 1200000, 1400000},
xticklabels={0.0, 0.2, 0.4, 0.6, 0.8, 1.0, 1.2, },
ytick style={color=black},
legend to name=named,
every x tick scale label/.style={
    at={(1,0)},xshift=-15.5pt,yshift=-10.0pt,anchor=south west,inner sep=0pt},
legend style={
    anchor=north,
    fill opacity=1, 
    draw opacity=1,
    legend columns=2, 
    text opacity=1, 
    draw=white!0.0!black,
    scale=0.40,font=\scriptsize\selectfont
    },
% every x tick scale label/.style={
%     at={(1,0)},xshift=3.5pt,yshift=-2.5pt,anchor=south west,inner sep=0pt},
ylabel={Probability Value},
xlabel={Elapsed Total Training Steps}
]
\addplot [very thick, color0, mark=diamond*, mark size=2]
table {%
0 0
50000 0
100000 0
150000 0.953333333333333
200000 0.953333333333333
250000 0.953333333333333
300000 0.953333333333333
350000 0.953333333333333
400000 0.953333333333333
450000 0.953333333333333
500000 0.953333333333333
550000 0.953333333333333
600000 0.953333333333333
650000 0.953333333333333
700000 0.953333333333333
750000 0.953333333333333
800000 0.953333333333333
850000 0.953333333333333
900000 0.953333333333333
950000 0.953333333333333
1000000 0.953333333333333
1050000 0.953333333333333
1100000 0.953333333333333
1150000 0.953333333333333
1200000 0.953333333333333
1250000 0.953333333333333
1300000 0.953333333333333
1350000 0.953333333333333
1400000 0.953333333333333
1450000 0.953333333333333
};
\addlegendentry{1}
\addplot [very thick, color1, mark=diamond*, mark size=2]
table {%
0 0
50000 0
100000 0
150000 0
200000 0
250000 0
300000 0
350000 0
400000 0
450000 0
500000 0
550000 0
600000 0
650000 0
700000 0
750000 0
800000 0
850000 0.513333333333333
900000 0.513333333333333
950000 0.513333333333333
1000000 0.513333333333333
1050000 0.513333333333333
1100000 0.513333333333333
1150000 0.513333333333333
1200000 0.513333333333333
1250000 0.44
1300000 1
1350000 1
1400000 1
1450000 1
};
\addplot [very thick, color2, mark=diamond*, mark size=2]
table {%
0 0
50000 0
100000 0
150000 0
200000 0
250000 0
300000 0
350000 0
400000 0
450000 0
500000 0
550000 0
600000 0
650000 0
700000 0
750000 0
800000 0
850000 0
900000 0
950000 0
1000000 0
1050000 0
1100000 0
1150000 0
1200000 0
1250000 0
1300000 0
1350000 0
1400000 0
1450000 0
};
\addplot [very thick, color3, mark=diamond*, mark size=2]
table {%
0 0
50000 0
100000 0
150000 0
200000 0
250000 0
300000 0
350000 0
400000 0
450000 0
500000 0
550000 0
600000 0
650000 0
700000 0
750000 0
800000 0
850000 0
900000 0.0466666666666667
950000 0.0466666666666667
1000000 0.0466666666666667
1050000 0.0466666666666667
1100000 0.8
1150000 0.8
1200000 0.8
1250000 0.8
1300000 0.8
1350000 0.8
1400000 0.996666666666667
1450000 0.996666666666667
};
\addplot [very thick, color4, mark=diamond*, mark size=2]
table {%
0 0
50000 0
100000 0
150000 0
200000 0
250000 0.146666666666667
300000 0.146666666666667
350000 0.146666666666667
400000 0.786666666666667
450000 0.883333333333333
500000 0.853333333333333
550000 0.88
600000 0.843333333333333
650000 0.9
700000 0.813333333333333
750000 0.853333333333333
800000 0.88
850000 0.88
900000 0.88
950000 0.88
1000000 0.88
1050000 0.88
1100000 0.88
1150000 0.88
1200000 0.88
1250000 0.88
1300000 0.88
1350000 0.88
1400000 0.88
1450000 0.88
};
\addplot [very thick, color5, mark=diamond*, mark size=2]
table {%
0 0
50000 0
100000 0.12
150000 0.12
200000 0.12
250000 0.12
300000 0.126666666666667
350000 1
400000 1
450000 1
500000 1
550000 1
600000 1
650000 1
700000 1
750000 1
800000 1
850000 1
900000 1
950000 1
1000000 1
1050000 1
1100000 1
1150000 1
1200000 1
1250000 1
1300000 1
1350000 1
1400000 1
1450000 1
};
\addplot [very thick, color6, mark=diamond*, mark size=2]
table {%
0 0
50000 0
100000 0
150000 0
200000 0
250000 0
300000 0
350000 0
400000 0
450000 0
500000 0
550000 0
600000 0
650000 0
700000 0
750000 0
800000 0
850000 0
900000 0
950000 0
1000000 0
1050000 0
1100000 0
1150000 0
1200000 0
1250000 0
1300000 0
1350000 0
1400000 0
1450000 0
};
\addplot [very thick, color7, mark=diamond*, mark size=2]
table {%
0 0
50000 0
100000 0
150000 0
200000 0
250000 0
300000 0
350000 0
400000 0
450000 0
500000 0
550000 0
600000 0
650000 0
700000 0
750000 0
800000 0
850000 0
900000 0
950000 0
1000000 0
1050000 0
1100000 0
1150000 0
1200000 0
1250000 0
1300000 0
1350000 0
1400000 0
1450000 0
};
\addplot [very thick, color8, mark=diamond*, mark size=2]
table {%
0 0
50000 0
100000 0
150000 0
200000 0
250000 0
300000 0
350000 0
400000 0
450000 0
500000 0
550000 0
600000 0
650000 0
700000 0
750000 0
800000 0
850000 0
900000 0
950000 0
1000000 0.89
1050000 0.89
1100000 0.89
1150000 0.89
1200000 0.89
1250000 0.89
1300000 0.89
1350000 0.89
1400000 0.89
1450000 1
};
\addplot [very thick, color9, mark=diamond*, mark size=2]
table {%
0 0
50000 0
100000 0
150000 0
200000 1
250000 1
300000 1
350000 1
400000 1
450000 1
500000 1
550000 1
600000 1
650000 1
700000 1
750000 1
800000 1
850000 1
900000 1
950000 1
1000000 1
1050000 1
1100000 1
1150000 1
1200000 1
1250000 1
1300000 1
1350000 1
1400000 1
1450000 1
};
\addplot [very thick, color10, mark=diamond*, mark size=2]
table {%
0 0
50000 0
100000 0
150000 0
200000 0
250000 0
300000 0
350000 0
400000 0
450000 0
500000 0
550000 0
600000 0
650000 0
700000 0
750000 0
800000 0
850000 0
900000 0
950000 0
1000000 0
1050000 0.126666666666667
1100000 0.126666666666667
1150000 0.14
1200000 1
1250000 1
1300000 1
1350000 1
1400000 1
1450000 1
};
\addplot [very thick, color11, mark=diamond*, mark size=2]
table {%
0 0
50000 0.573333333333333
100000 0.573333333333333
150000 0.573333333333333
200000 0.573333333333333
250000 0.573333333333333
300000 0.573333333333333
350000 0.573333333333333
400000 0.573333333333333
450000 0.573333333333333
500000 0.573333333333333
550000 0.573333333333333
600000 0.573333333333333
650000 0.573333333333333
700000 0.573333333333333
750000 0.573333333333333
800000 0.573333333333333
850000 0.573333333333333
900000 0.573333333333333
950000 0.573333333333333
1000000 0.573333333333333
1050000 0.573333333333333
1100000 0.573333333333333
1150000 0.573333333333333
1200000 0.573333333333333
1250000 0.573333333333333
1300000 0.573333333333333
1350000 1
1400000 1
1450000 1
};
\addplot [line width=2pt, black, dashed ,on layer=foreground]
table {%
0 0.95
1450000 0.95
};
\addlegendentry{Required probability of task success}
\addplot [line width=2pt, blue!70!white, mark=diamond*, mark size=3]
table {%
0 0
50000 0
100000 0
150000 0
200000 0
250000 0.0167786666666667
300000 0.0177108148148148
350000 0.139822222222222
400000 0.749955555555555
450000 0.842111111111111
500000 0.813511111111111
550000 0.838933333333333
600000 0.803977777777778
650000 0.858
700000 0.775377777777778
750000 0.813511111111111
800000 0.838933333333333
850000 0.838933333333333
900000 0.838933333333333
950000 0.838933333333333
1000000 0.838933333333333
1050000 0.838933333333333
1100000 0.838933333333333
1150000 0.838933333333333
1200000 0.838933333333333
1250000 0.838933333333333
1300000 0.838933333333333
1350000 0.838933333333333
1400000 0.887033333333333
1450000 0.996666666666667
};
%\addlegendentry{Lower Bound on Probability of Task Success}
\addplot [line width=2pt, black, mark=diamond*, mark size=3]
table {%
0 0
50000 0
100000 0
150000 0
200000 0
250000 0.01
300000 0.0366666666666667
350000 0.203333333333333
400000 0.746666666666667
450000 0.783333333333333
500000 0.846666666666667
550000 0.846666666666667
600000 0.813333333333333
650000 0.856666666666667
700000 0.733333333333333
750000 0.783333333333333
800000 0.84
850000 0.833333333333333
900000 0.806666666666667
950000 0.813333333333333
1000000 0.82
1050000 0.826666666666667
1100000 0.876666666666667
1150000 0.823333333333333
1200000 0.846666666666667
1250000 0.823333333333333
1300000 0.81
1350000 0.826666666666667
1400000 0.963333333333333
1450000 0.986666666666667
};
\addlegendentry{Empirically Measured Probability of Task Success}
\addplot [line width=5pt, red, dashed  ,on layer=foreground]
table {%
800000 -0.05
800000 1.05
};
\end{axis}
\end{tikzpicture}
%     \ref{named}

% Required\\ probability\\of task success, Empirical\\ probability\\of task success, HLM predicted\\ probability\\of task success
        \label{fig:experiments:b}
    \end{subfigure}
    \hspace*{\fill}
    \centering
    \begin{subfigure}[t]{0.48\textwidth}
        \centering% This file was created by tikzplotlib v0.9.8.
\begin{tikzpicture}

\definecolor{color0}{rgb}{0.596078431372549,0.874509803921569,0.541176470588235}
\definecolor{color1}{rgb}{0.12156862745098,0.466666666666667,0.705882352941177}
\definecolor{color2}{rgb}{0.772549019607843,0.690196078431373,0.835294117647059}
\definecolor{color3}{rgb}{0.172549019607843,0.627450980392157,0.172549019607843}
\definecolor{color4}{rgb}{0.682352941176471,0.780392156862745,0.909803921568627}
\definecolor{color5}{rgb}{1,0.733333333333333,0.470588235294118}
\definecolor{color6}{rgb}{0.580392156862745,0.403921568627451,0.741176470588235}
\definecolor{color7}{rgb}{0.549019607843137,0.337254901960784,0.294117647058824}
\definecolor{color8}{rgb}{0.768627450980392,0.611764705882353,0.580392156862745}

\begin{axis}[
height=4.5cm,
width=0.99\textwidth,
% width=1.1\textwidth,
tick align=inside,
tick pos=left,
x grid style={white!69.0196078431373!black},
ticklabel style = {font=\footnotesize},
xmajorgrids,
xmin=0, xmax=1450000,
xtick style={color=black},
y grid style={white!69.0196078431373!black},
ymajorgrids,
ymin=-0.55, ymax=11.55,
ytick style={color=black},
ytick={0,1,2,3,4,5,6,7,8,9,10,11},
yticklabels = {0,1,2,3,4,5,6,7,8,9,10,11},
xtick={0, 200000, 400000, 600000, 800000, 1000000, 1200000, 1400000},
xticklabels={0.0, 0.2, 0.4, 0.6, 0.8, 1.0, 1.2, },
every x tick scale label/.style={
    at={(1,0)},xshift=-15.5pt,yshift=-10.0pt,anchor=south west,inner sep=0pt},
% every x tick scale label/.style={
%     at={(1,0)},xshift=3.5pt,yshift=-2.5pt,anchor=south west,inner sep=0pt},
ylabel={Subsystem Index},
xlabel={Elapsed Total Training Steps}
]
\addplot [line width=8pt, color0]
table {%
50000 5
100000 5
};
\addplot [line width=8pt, color1]
table {%
100000 0
150000 0
};
\addplot [line width=8pt, color2]
table {%
150000 9
200000 9
};
\addplot [line width=8pt, color3]
table {%
200000 4
250000 4
};
\addplot [line width=8pt, color0]
table {%
250000 5
300000 5
};
\addplot [line width=8pt, color0]
table {%
300000 5
350000 5
};
\addplot [line width=8pt, color3]
table {%
350000 4
400000 4
};
\addplot [line width=8pt, color3]
table {%
400000 4
450000 4
};
\addplot [line width=8pt, color3]
table {%
450000 4
500000 4
};
\addplot [line width=8pt, color3]
table {%
500000 4
550000 4
};
\addplot [line width=8pt, color3]
table {%
550000 4
600000 4
};
\addplot [line width=8pt, color3]
table {%
600000 4
650000 4
};
\addplot [line width=8pt, color3]
table {%
650000 4
700000 4
};
\addplot [line width=8pt, color3]
table {%
700000 4
750000 4
};
\addplot [line width=8pt, color3]
table {%
750000 4
800000 4
};
\addplot [line width=8pt, color4]
table {%
800000 1
850000 1
};
\addplot [line width=8pt, color5]
table {%
850000 3
900000 3
};
\addplot [line width=8pt, color6]
table {%
900000 8
950000 8
};
\addplot [line width=8pt, color6]
table {%
950000 8
1000000 8
};
\addplot [line width=8pt, color7]
table {%
1000000 10
1050000 10
};
\addplot [line width=8pt, color5]
table {%
1050000 3
1100000 3
};
\addplot [line width=8pt, color7]
table {%
1100000 10
1150000 10
};
\addplot [line width=8pt, color7]
table {%
1150000 10
1200000 10
};
\addplot [line width=8pt, color4]
table {%
1200000 1
1250000 1
};
\addplot [line width=8pt, color4]
table {%
1250000 1
1300000 1
};
\addplot [line width=8pt, color8]
table {%
1300000 11
1350000 11
};
\addplot [line width=8pt, color5]
table {%
1350000 3
1400000 3
};
\addplot [line width=8pt, color6]
table {%
1400000 8
1450000 8
};
\addplot [line width=5pt, red, dashed]
table {%
800000 -0.55
800000 11.55
};
\end{axis}

\end{tikzpicture}
    \label{fig:experiments:c}
    \end{subfigure}
    \caption{Discrete labyrinth experimental results.
    Left: Estimated task and subtask success probabilities during training.
    The blue and black curves show the HLM-predicted and the empirically observed task success probabilities for the entire compositional system.
    The smaller colored curves correspond to the subtask success probabilities achieved by the subsystems.
    Right: Automatically generated subsystem training schedule.
    Each subtask is represented by a different color, matching those used in Figure \ref{fig:running_example}.
    The dotted red lines illustrate the point in training at which the HLM automatically refines the subtask specifications.
    Step counts do not include the rollouts used to estimate subtask success probabilities.
    }
    \label{fig:experiments}
\end{figure}
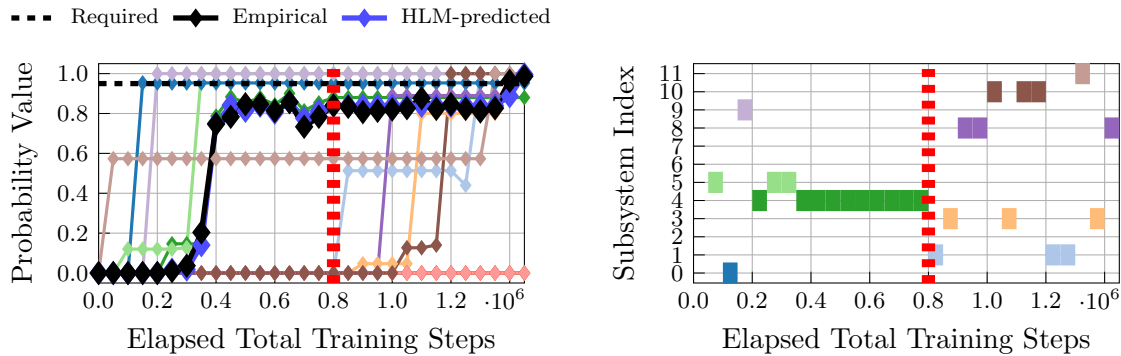

\subsubsection{Empirical Validation of Theorem \ref{thm:hlm_bounds_true_performance}}
At regular intervals during training, marked by diamonds in Figure \ref{fig:experiments}, each subsystem's probability of subtask success is estimated and used to update \(\lbList\) and \(\ubList\), as described in \S \ref{sec:icrl}. That is, each diamond in Figure \ref{fig:experiments} corresponds to a pass through the main loop of algorithm \ref{alg:ICRL}.
The HLM-predicted probability of the meta-policy completing the overall task is illustrated in Figure \ref{fig:experiments} by the navy blue curve.
For comparison, we plot empirical measurements of the success rate of the compositional system in black.
We clearly observe that the HLM predictions closely match the empirical measurements, indicating that they are suitably accurate for the planning and verification of meta-policies. 

\subsubsection{Subtask Specification Refinements Lead to Meta-Policy Adaptation and Targeted Subsystem Training}
\label{sec:experiments_discrete_adaptation}
The plot on the right of Figure \ref{fig:experiments} illustrates the subsystem training schedule. Table \ref{tab:sub_task_specifications} lists the values of \(\bernoulliProb_{\controller}\) for each subsystem \(\controller\).
We observe from Table \ref{tab:sub_task_specifications} that prior to \(8e5\) elapsed training steps, the value of \(\bernoulliProb_{\controller}\) is only specified to be close to \(1.0\) for subsystems \(\controller_0\), \(\controller_4\), \(\controller_5\), and, \(\controller_9\). 
As can be seen in Figure \ref{fig:running_example}, these are the subsystems needed to move straight down, through the rooms containing lava, to the goal. 
The HLM has selected a meta-policy that will only use these subsystems because their composition yields the shortest path to goal; this path only requires training of 4 of the subsystems.
Furthermore, because the meta-policy does not use any of the other subsystems, it places no requirements on their probability of subtask success.

The training schedule in Figure \ref{fig:experiments} agrees with this observation: only this small collection of the subsystems are trained prior to \(800,000\) elapsed training steps.
In particular, subsystem \(4\), which must navigate the top lava room and is represented by dark green, is trained extensively.
However, due to the environment slip probability, this subsystem is unable to meet its subtask specification, \textit{safely navigate to the room's exit with probability 0.97}, regardless of the number of training iterations it receives.
As a result, subsystem \(4\) exhausts its individual training budget after \(800,000\) elapsed system training steps, marked by the vertical dotted red lines in Figure \ref{fig:experiments}.
At this point, subsystem \(4\)'s empirically estimated success rate of 0.88 is used to update the HLM, which then refines the subtask specifications.

The result of this refinement is a new meta-policy, which instead uses subsystems \(\controller_1\), \(\controller_3\), \(\controller_8\), \(\controller_{10}\), and \(\controller_{11}\) to take an alternate path that avoids the lava rooms altogether.
The updated subtask specifications are listed in the second row of Table \ref{tab:sub_task_specifications}, and in Figure \ref{fig:experiments} we observe a distinct change in the subsystems that are trained.
Once subsystems \(\controller_1\), \(\controller_3\), \(\controller_8\), \(\controller_{10}\), and \(\controller_{11}\) learn to satisfy their new subtask specifications with the required probability, the composite system's probability of task success rises above \(0.95\), satisfying the overall task specification.

\definecolor{color0}{rgb}{0.12156862745098,0.466666666666667,0.705882352941177}
\definecolor{color1}{rgb}{0.682352941176471,0.780392156862745,0.909803921568627}
\definecolor{color2}{rgb}{1,0.498039215686275,0.0549019607843137}
\definecolor{color3}{rgb}{1,0.733333333333333,0.470588235294118}
\definecolor{color4}{rgb}{0.172549019607843,0.627450980392157,0.172549019607843}
\definecolor{color5}{rgb}{0.596078431372549,0.874509803921569,0.541176470588235}
\definecolor{color6}{rgb}{0.83921568627451,0.152941176470588,0.156862745098039}
\definecolor{color7}{rgb}{1,0.596078431372549,0.588235294117647}
\definecolor{color8}{rgb}{0.580392156862745,0.403921568627451,0.741176470588235}
\definecolor{color9}{rgb}{0.772549019607843,0.690196078431373,0.835294117647059}
\definecolor{color10}{rgb}{0.549019607843137,0.337254901960784,0.294117647058824}
\definecolor{color11}{rgb}{0.768627450980392,0.611764705882353,0.580392156862745}

\definecolor{highlightgrey}{gray}{0.85}

\begin{table}[t]
\centering
 \begin{tabular}{|c | c | c | c | c | c | c | c | c | c | c | c | c |} 
 \hline
 Subsystem Index & 
 \cellcolor{color0}0 & 
 \cellcolor{color1}1 & 
 \cellcolor{color2}2 & 
 \cellcolor{color3}3 & 
 \cellcolor{color4}4 & 
 \cellcolor{color5}5 & 
 \cellcolor{color6}6 & 
 \cellcolor{color7}7 & 
 \cellcolor{color8}8 & 
 \cellcolor{color9}9 & 
 \cellcolor{color10}10 & 
 \cellcolor{color11}11 \\
 \hline\hline
 \(\bernoulliProb_{\controller}\) at \(t = 600,000\) 
 & \cellcolor{highlightgrey}.97 
 & .00 
 & .00 
 & .00 
 & \cellcolor{highlightgrey}.97 
 & \cellcolor{highlightgrey}1.0 
 & .00 
 & .00 
 & .00 
 & \cellcolor{highlightgrey}1.0 
 & .00 
 & .57\\ 
 \hline
 \(\bernoulliProb_{\controller}\) at \(t = 1 \text{ million}\) 
 & .95 
 & \cellcolor{highlightgrey}.99 
 & .00 
 & \cellcolor{highlightgrey}.99 
 & .88 
 & 1.0 
 & .00 
 & .00 
 & \cellcolor{highlightgrey}.99 
 & 1.0 
 & \cellcolor{highlightgrey}.99 
 & \cellcolor{highlightgrey}.99\\
 \hline
\end{tabular}
\caption{Demonstration of automatic subtask specification refinement. Each value corresponds to a subtask specification, i.e. the minimum allowable probability of subtask success. The two rows of the table show these values at two distinct points of the system's training; before and after the subtask specification refinement illustrated by the dotted red lines in Figure \ref{fig:experiments}. The cells highlighted in grey indicate which subsystems are used by the meta-policy, at the specified point in training.}
\label{tab:sub_task_specifications}
\end{table}

\subsubsection{Comparison with a Monolithic Approach to Reinforcement Learning}
As a point of comparison, we also trained a monolithic RL agent---which treats the entire task as a single subsystem---to reach the target set \(\mathcal{F}_{targ}\).
This agent was trained using the same PPO algorithm parameters and neural network architecture as described in \S \ref{sec:experimental_implementation_icrl_algorithm}.
The proposed ICRL algorithm takes less than two million training steps to learn to satisfy the task specification. 
Meanwhile, the monolithic RL agent takes roughly thirty million training steps.
We note that this is not a fair comparison because the proposed compositional approach is provided with a priori knowledge of the subsystem entry and exit conditions.
However, such information is often available through natural decompositions of complex systems.
The proposed framework provides a method to take advantage of such information when it is available.

Furthermore, we remark that the primary focus of our work is on the verification of the compositional RL system against the task specification, and on the high-level decision-making capabilities that are provided by the HLM formulation.
These novel decision-making capabilities, such as the automated meta-policy adaptation and the targeted subsystem training that are described in \S \ref{sec:experiments_discrete_adaptation}, are not possible using monolithic approaches to RL or using existing hierarchical RL methods.

\subsection{Experiments in Environments with Partial Observability}
\label{sec:experiments_partial_observability}

We note that while the subtask entry and exit conditions are defined in terms of collections of environments states, the task abstraction provided by the HLM, as well as the high-level decision-making procedure proposed in Algorithm \ref{alg:ICRL}, are both agnostic to whether or not the agent is able to fully observe its own state.
In other words, the compositional RL system can operate effectively under partial observations, so long as its observations are sufficiently informative to train policies that complete the necessary subtasks.

\subsubsection{The Partially Observable Labyrinth Environment}
To numerically demonstrate the above claim, we apply the ICRL algorithm to a partially observable version of the task described in \S \ref{sec:experiment_discrete_labyrinth}. 
In this experiment, the agent may only make egocentric image-based observations of its surroundings.
These image observations are \(140\times140\) pixels with RGB color channels.
An example observation is illustrated on the left of Figure \ref{fig:experiments_labyrinth_environments}.
We assume that the subsystem controllers will terminate their execution once they have reached one of their exit conditions.

\subsubsection{Network Architecture Used to Represent the Subsystem Policies}
Our implementation of the ICRL algorithm remains identical to the fully observable setting described in \ref{sec:experimental_implementation_icrl_algorithm}, 
with the exception of the neural network architecture used to represent the subsystem policies and the estimated value functions, which are used by the PPO training algorithm.
To extract useful features from the observations, we use a convolutional neural network with the same architecture as the one used by \shortciteA{mnih2015human} to train RL agents to play Atari games.
The output of the feature extractor is a \(512\)-dimensional vector that is passed into two separate linear layers. The first layer outputs action probability values, while the second outputs an estimate of the value function. 

\subsubsection{Compositionality Enables Probabilistic Reasoning in Vision-to-Action Deep Learning Agents}

\begin{figure*}[t!]
    \centering
    \begin{subfigure}[t]{0.48\textwidth}
        \centering \input{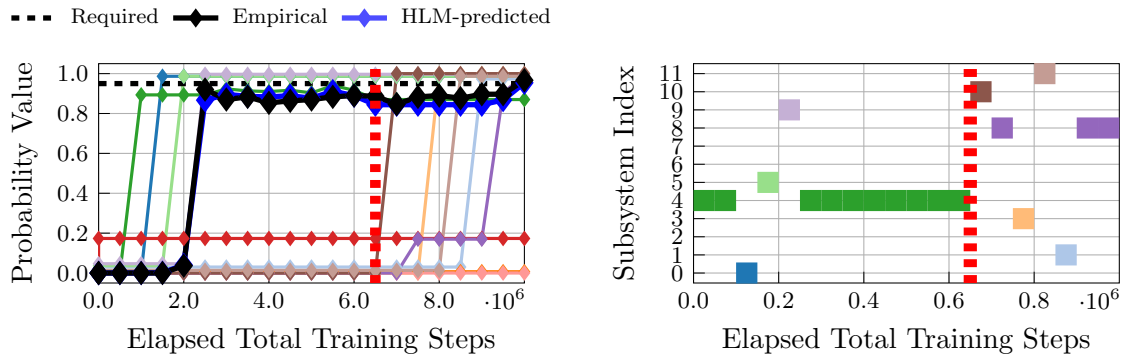}
        \label{fig:partially_observable_training_results}
    \end{subfigure}%
    \hspace*{\fill}
    \begin{subfigure}[t]{0.48\textwidth}
        \centering % This file was created with tikzplotlib v0.10.1.
\begin{tikzpicture}

\definecolor{darkgray176}{RGB}{176,176,176}
\definecolor{forestgreen4416044}{RGB}{44,160,44}
\definecolor{lightgreen152223138}{RGB}{152,223,138}
\definecolor{lightsalmon255187120}{RGB}{255,187,120}
\definecolor{lightsteelblue174199232}{RGB}{174,199,232}
\definecolor{mediumpurple148103189}{RGB}{148,103,189}
\definecolor{rosybrown196156148}{RGB}{196,156,148}
\definecolor{sienna1408675}{RGB}{140,86,75}
\definecolor{steelblue31119180}{RGB}{31,119,180}
\definecolor{thistle197176213}{RGB}{197,176,213}

\begin{axis}[
height=4.5cm,
width=0.99\textwidth,
% width=1.1\textwidth,
tick align=inside,
tick pos=left,
x grid style={white!69.0196078431373!black},
ticklabel style = {font=\footnotesize},
xmajorgrids,
xmin=0, xmax=1000000,
xtick style={color=black},
y grid style={white!69.0196078431373!black},
ymajorgrids,
ymin=-0.55, ymax=11.55,
ytick style={color=black},
ytick={0,1,2,3,4,5,6,7,8,9,10,11},
yticklabels = {0,1,2,3,4,5,6,7,8,9,10,11},
xtick={0, 200000, 400000, 600000, 800000},
xticklabels={0.0, 0.2, 0.4, 0.6, 0.8},
every x tick scale label/.style={
    at={(1,0)},xshift=-15.5pt,yshift=-10.0pt,anchor=south west,inner sep=0pt},
% every x tick scale label/.style={
%     at={(1,0)},xshift=3.5pt,yshift=-2.5pt,anchor=south west,inner sep=0pt},
ylabel={Subsystem Index},
xlabel={Elapsed Total Training Steps}
]

% \begin{axis}[
% tick align=outside,
% tick pos=left,
% x grid style={darkgray176},
% xmajorgrids,
% xmin=-50000, xmax=1050000,
% xtick style={color=black},
% xtick={-200000,0,200000,400000,600000,800000,1000000,1200000},
% xticklabels={\ensuremath{-}0.2,0.0,0.2,0.4,0.6,0.8,1.0,1.2},
% y grid style={darkgray176},
% ymajorgrids,
% ymin=-0.55, ymax=11.55,
% ytick style={color=black}
% ]
\addplot [line width=8pt, forestgreen4416044]
table {%
0 4
50000 4
};
\addplot [line width=8pt, forestgreen4416044]
table {%
50000 4
100000 4
};
\addplot [line width=8pt, steelblue31119180]
table {%
100000 0
150000 0
};
\addplot [line width=8pt, lightgreen152223138]
table {%
150000 5
200000 5
};
\addplot [line width=8pt, thistle197176213]
table {%
200000 9
250000 9
};
\addplot [line width=8pt, forestgreen4416044]
table {%
250000 4
300000 4
};
\addplot [line width=8pt, forestgreen4416044]
table {%
300000 4
350000 4
};
\addplot [line width=8pt, forestgreen4416044]
table {%
350000 4
400000 4
};
\addplot [line width=8pt, forestgreen4416044]
table {%
400000 4
450000 4
};
\addplot [line width=8pt, forestgreen4416044]
table {%
450000 4
500000 4
};
\addplot [line width=8pt, forestgreen4416044]
table {%
500000 4
550000 4
};
\addplot [line width=8pt, forestgreen4416044]
table {%
550000 4
600000 4
};
\addplot [line width=8pt, forestgreen4416044]
table {%
600000 4
650000 4
};
\addplot [line width=8pt, sienna1408675]
table {%
650000 10
700000 10
};
\addplot [line width=8pt, mediumpurple148103189]
table {%
700000 8
750000 8
};
\addplot [line width=8pt, lightsalmon255187120]
table {%
750000 3
800000 3
};
\addplot [line width=8pt, rosybrown196156148]
table {%
800000 11
850000 11
};
\addplot [line width=8pt, lightsteelblue174199232]
table {%
850000 1
900000 1
};
\addplot [line width=8pt, mediumpurple148103189]
table {%
900000 8
950000 8
};
\addplot [line width=8pt, mediumpurple148103189]
table {%
950000 8
1000000 8
};
\addplot [line width=5pt, red, dashed]
table {%
650000 -0.55
650000 11.55
};
\end{axis}

\end{tikzpicture}
        \label{fig:partially_observable_training_schedule}
    \end{subfigure}
    \caption{
        Results for the partially observable labyrinth environment. 
        Left: Estimated task and subtask success probabilities during training.
        Right: Automatically generated subsystem training schedule.
    } 
    \label{fig:partially_observable_training_curves}
\end{figure*}

Figure \ref{fig:partially_observable_training_curves} illustrates the results of this experiment.
Qualitatively, the performance of the compositional RL system is identical to the results in the fully observable environment described in \S \ref{sec:experiment_discrete_labyrinth}.
ICRL initially attempts to train the subsystems \(\controller_0, \controller_4, \controller_5,\) and \(\controller_9\), which are used to move straight down through the rooms containing lava to reach the goal.
Subsystem \(\controller_{4}\) is able to learn a policy that achieves its objective with an estimated probability value of \(0.94\), resulting in a compositional system that the HLM predicts will complete the overall task with probability \(0.91\).
However, this is not sufficient to satisfy the required overall task success probability of \(0.95\).
As a result, similarly to as described in \S \ref{sec:experiments_discrete_adaptation}, the system makes a significant adjustment to its overall plan at the point during training marked by the dottend red line.
Subsystems \(\controller_1, \controller_3, \controller_8, \controller_{10}\), and \(\controller_{11}\) are trained to reach the goal instead.

We emphasize that the results demonstrate the ability of the compositional RL system to automatically reason about the probability of its own success, even while using deep learning vision-to-action policies to complete its subtasks.
With only partial observations that give it a local view of its environment, the system automatically replans during training because it accurately predicts that it will fall short of its required probability of task success by a mere four percentage points.

\subsection{Experiments in Environments with Continuous State and Action Spaces}
\label{sec:experiments_continuous_labyrinth}

To demonstrate the framework's ability to generalize to RL settings with large state and action spaces and complex environment dynamics, we also implemented a continuous-state and continuous-action version of the labyrinth environment in the video game engine \textit{Unity} \shortcite{juliani2018unity}.
The Unity environment is illustrated on the right in Figure \ref{fig:experiments_labyrinth_environments}.

\subsubsection{The Continuous Labyrinth Environment}
In this version of the task, the RL system must roll a ball from the initial location to the goal location.
The set \(\mdpActionSet\) of available actions consists of all of the force vectors, with magnitude of at most 1 \(N\), that can be applied to the ball in the horizontal plane.
The set \(\mdpStateSet\) of environment states is given by all possible locations \((x,y)\) and velocities \((\dot{x}, \dot{y})\) of the ball within the labyrinth.
The action space \(\mdpActionSet\) is thus a compact subset of \(\mathbb{R}^2\) while the state space \(\mdpStateSet\) is a compact subset of \(\mathbb{R}^4\). 
The environment's dynamics are governed by \textit{Unity}'s rigid-body physics simulator.

Subtask entry and exit conditions are implemented as compact subsets of points \\ \((x,y,\dot{x}, \dot{y}) \in \mathbb{R}^4\) such that \(\sqrt{(x - x_{\controller})^2 + (y-y_{\controller})^2} \leq 0.5m\) and \(\sqrt{\dot{x}^2 + \dot{y}^2} \leq 0.5\frac{m}{s}\) respectively, for some pre-specified \(x_c\) and \(y_c\).
Here, \((x,y)\) and \((\dot{x}, \dot{y})\) correspond to the position and velocity values, respectively, and \((x_{\controller}\), \(y_{\controller})\) represents pre-specified entry or exit position.

\subsubsection{Training the Subsystem Policies}
Similarly to as in the discrete environment, the agent receives a reward of \(1\) when it reaches an exit condition.
However, in the continuous environment, a more dense reward signal is required for training the subsystems.
If \((x_{\controller}, y_{\controller})\) is the exit position of subsystem \(\controller\), then the subsystem receives a reward that is proportional to \(- \sqrt{(x - x_{\controller})^2 + (y-y_{\controller})^2}\) every timestep.
To prevent the agent from learning to purposefully move into the lava in order to end the episode early and stop itself from accumulating this negative reward, we additionally penalize the agent with a reward of \(-1\) if it touches the lava.
The observations provided to the agent consist of its own position and velocity, as well as the position of the target.

The subsystem policy networks are identical to those described in \S \ref{sec:experiment_discrete_labyrinth}, with the exception of the output layer, which is adapted for the continuous action space of the environment.
More specifically, the outputs of the policy networks parametrize a Gaussian distribution with a diagonal covariance matrix.
We use the PPO algorithm with the same parameters as in \S \ref{sec:experiment_discrete_labyrinth} to train the RL subsystem policies.

In this continuous labyrinth environment, we use \(N_{max} = 250,000\) training steps. 
A complete training run for the continuous labyrinth environment, which consists of roughly 800,000 total environment interactions across all subsystems, takes approximately 60 minutes of wall-clock time.

\subsubsection{Compositional Reinforcement Learning Systems are Agnostic to the Complexity of the Underlying Dynamics}
Figure \ref{fig:cont_lab_training_curves} illustrates the experimental results in the continuous labyrinth environment.
Qualitatively, these results closely resemble our observations from the discrete labyrinth presented in \S \ref{sec:experiment_discrete_labyrinth} and \S \ref{sec:experiments_partial_observability}, 
despite significant differences in the environment's dynamics and in its representations of states and actions.
The ICRL algorithm again initially attempts to move straight down past the lava, before automatically refining the subtask specifications in order to focus on training the subsystems that take the alternate route through the labyrinth.
This similarity in the algorithm's behavior when applied to different types of environments helps illustrate the generality of the proposed framework; ICRL is agnostic to the details of the environment dynamics, so long as the RL subsystems are trained effectively.

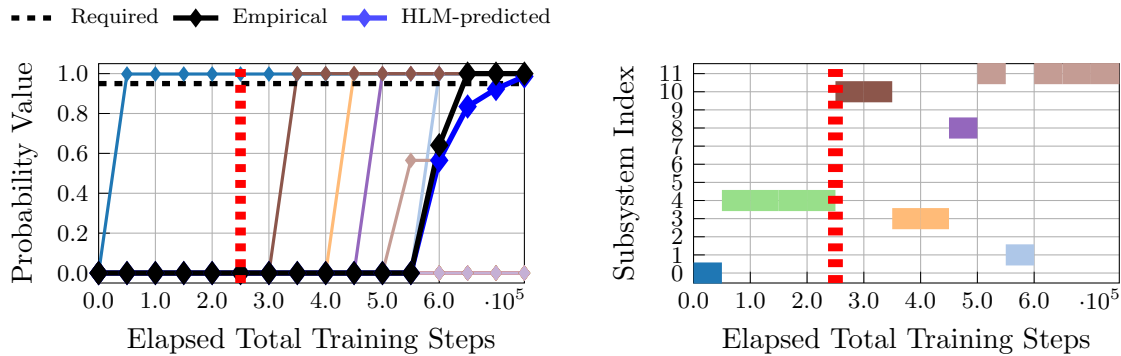
\begin{figure*}[t!]
    \centering
    \begin{subfigure}[t]{0.48\textwidth}
        \centering % This file was created with tikzplotlib v0.9.15.
\begin{tikzpicture}

\definecolor{color0}{rgb}{0.12156862745098,0.466666666666667,0.705882352941177}
\definecolor{color1}{rgb}{0.682352941176471,0.780392156862745,0.909803921568627}
\definecolor{color2}{rgb}{1,0.498039215686275,0.0549019607843137}
\definecolor{color3}{rgb}{1,0.733333333333333,0.470588235294118}
\definecolor{color4}{rgb}{0.172549019607843,0.627450980392157,0.172549019607843}
\definecolor{color5}{rgb}{0.596078431372549,0.874509803921569,0.541176470588235}
\definecolor{color6}{rgb}{0.83921568627451,0.152941176470588,0.156862745098039}
\definecolor{color7}{rgb}{1,0.596078431372549,0.588235294117647}
\definecolor{color8}{rgb}{0.580392156862745,0.403921568627451,0.741176470588235}
\definecolor{color9}{rgb}{0.772549019607843,0.690196078431373,0.835294117647059}
\definecolor{color10}{rgb}{0.549019607843137,0.337254901960784,0.294117647058824}
\definecolor{color11}{rgb}{0.768627450980392,0.611764705882353,0.580392156862745}

\begin{customlegend}[legend columns=3, legend style={align=center, column sep=0.5ex, font=\scriptsize, draw=none, at={(62.0mm, 38.0mm)}}, legend entries={Required, Empirical, HLM-predicted}]
\addlegendimage{line width=2pt, black, dashed}
\addlegendimage{line width=2pt, black, mark=diamond*, mark size=2}
\addlegendimage{line width=2pt, blue!70!white, mark=diamond*, mark size=2}
\end{customlegend}

\begin{axis}[
% legend cell align={left},
% legend style={
%   fill opacity=0.8,
%   draw opacity=1,
%   text opacity=1,
%   at={(0.03,0.97)},
%   anchor=north west,
%   draw=white!80!black
% },
% tick align=outside,
% tick pos=left,
% x grid style={white!69.0196078431373!black},
% xmajorgrids,
% xmin=-35000, xmax=735000,
% xtick style={color=black},
% y grid style={white!69.0196078431373!black},
% ymajorgrids,
% ymin=-0.05, ymax=1.05,
% ytick style={color=black}
tick align=inside,
tick pos=left,
x grid style={white!69.0196078431373!black},
  ticklabel style = {font=\footnotesize},
xmajorgrids,
xmin=0, xmax=750000,
xtick style={color=black},
y grid style={white!69.0196078431373!black},
ymajorgrids,
ymin=-0.05, ymax=1.05,
height=4.5cm,
width=0.99\textwidth,
% width=1.1\textwidth,
ytick={0.0, 0.2, 0.4, 0.6, 0.8, 1.0},
yticklabels={0.0, 0.2, 0.4, 0.6, 0.8, 1.0},
xtick={0, 100000, 200000, 300000, 400000, 500000, 600000, 800000, 1000000, 1200000, 1400000},
xticklabels={0.0, 1.0, 2.0, 3.0, 4.0, 5.0, 6.0, 8.0, 10.0, 12.0, },
ytick style={color=black},
legend to name=named,
every x tick scale label/.style={
    at={(1,0)},xshift=-15.5pt,yshift=-10.0pt,anchor=south west,inner sep=0pt},
%legend style={anchor=north,at={(axis description cs:-0.5,-1.18)},fill opacity=1, draw opacity=1,legend columns=2, text opacity=1, draw=white!0.0!black,scale=0.40,font=\scriptsize\selectfont},
% every x tick scale label/.style={
%     at={(1,0)},xshift=3.5pt,yshift=-2.5pt,anchor=south west,inner sep=0pt},
ylabel={Probability Value},
xlabel={Elapsed Total Training Steps}
]
\addplot [very thick, color0, mark=diamond*, mark size=2, mark options={solid}, forget plot]
table {%
0 0
0 0
50000 0.998
100000 0.998
150000 0.998
200000 0.998
250000 0.998
300000 0.998
350000 0.998
400000 0.998
450000 0.998
500000 0.998
550000 0.998
600000 0.998
650000 0.998
700000 0.998
750000 0.998
};
\addplot [very thick, color1, mark=diamond*, mark size=2, mark options={solid}, forget plot]
table {%
0 0
0 0
50000 0
100000 0
150000 0
200000 0
250000 0
300000 0
350000 0
400000 0
450000 0
500000 0
550000 0
600000 1
650000 1
700000 1
750000 1
};
\addplot [very thick, color2, mark=diamond*, mark size=2, mark options={solid}, forget plot]
table {%
0 0
0 0
50000 0
100000 0
150000 0
200000 0
250000 0
300000 0
350000 0
400000 0
450000 0
500000 0
550000 0
600000 0
650000 0
700000 0
750000 0
};
\addplot [very thick, color3, mark=diamond*, mark size=2, mark options={solid}, forget plot]
table {%
0 0
0 0
50000 0
100000 0
150000 0
200000 0
250000 0
300000 0
350000 0
400000 0
450000 1
500000 1
550000 1
600000 1
650000 1
700000 1
750000 1
};
\addplot [very thick, color4, mark=diamond*, mark size=2, mark options={solid}, forget plot]
table {%
0 0
0 0
50000 0
100000 0
150000 0
200000 0
250000 0
300000 0
350000 0
400000 0
450000 0
500000 0
550000 0
600000 0
650000 0
700000 0
750000 0
};
\addplot [very thick, color5, mark=diamond*, mark size=2, mark options={solid}, forget plot]
table {%
0 0
0 0
50000 0
100000 0
150000 0
200000 0
250000 0
300000 0
350000 0
400000 0
450000 0
500000 0
550000 0
600000 0
650000 0
700000 0
750000 0
};
\addplot [very thick, color6, mark=diamond*, mark size=2, mark options={solid}, forget plot]
table {%
0 0
0 0
50000 0
100000 0
150000 0
200000 0
250000 0
300000 0
350000 0
400000 0
450000 0
500000 0
550000 0
600000 0
650000 0
700000 0
750000 0
};
\addplot [very thick, color7, mark=diamond*, mark size=2, mark options={solid}, forget plot]
table {%
0 0
0 0
50000 0
100000 0
150000 0
200000 0
250000 0
300000 0
350000 0
400000 0
450000 0
500000 0
550000 0
600000 0
650000 0
700000 0
750000 0
};
\addplot [very thick, color8, mark=diamond*, mark size=2, mark options={solid}, forget plot]
table {%
0 0
0 0
50000 0
100000 0
150000 0
200000 0
250000 0
300000 0
350000 0
400000 0
450000 0
500000 1
550000 1
600000 1
650000 1
700000 1
750000 1
};
\addplot [very thick, color9, mark=diamond*, mark size=2, mark options={solid}, forget plot]
table {%
0 0
0 0
50000 0
100000 0
150000 0
200000 0
250000 0
300000 0
350000 0
400000 0
450000 0
500000 0
550000 0
600000 0
650000 0
700000 0
750000 0
};
\addplot [very thick, color10, mark=diamond*, mark size=2, mark options={solid}, forget plot]
table {%
0 0
0 0
50000 0
100000 0
150000 0
200000 0
250000 0
300000 0
350000 1
400000 1
450000 1
500000 1
550000 1
600000 1
650000 1
700000 1
750000 1
};
\addplot [very thick, color11, mark=diamond*, mark size=2, mark options={solid}, forget plot]
table {%
0 0
0 0
50000 0
100000 0
150000 0
200000 0
250000 0
300000 0
350000 0
400000 0
450000 0
500000 0
550000 0.565
600000 0.565
650000 0.835
700000 0.923
750000 0.986
};
\addplot [line width=2pt, black, dashed]
table {%
0 0.95
750000 0.95
};
\addlegendentry{Required Probability of Success}
\addplot [line width=2pt, blue, mark=diamond*, mark size=3, mark options={solid}]
table {%
0 0
0 0
50000 0
100000 0
150000 0
200000 0
250000 0
300000 0
350000 0
400000 0
450000 0
500000 0
550000 0
600000 0.565
650000 0.835
700000 0.923
750000 0.986
};
\addlegendentry{Lower Bound on Probability of Task Success}
\addplot [line width=2pt, black, mark=diamond*, mark size=3, mark options={solid}]
table {%
0 0
0 0
50000 0
100000 0
150000 0
200000 0
250000 0
300000 0
350000 0
400000 0
450000 0
500000 0
550000 0
600000 0.641
650000 1
700000 1
750000 1
};
\addlegendentry{Empirically Measured Probability of Task Success}
\addplot [line width=4pt, red, dashed, forget plot, on layer=foreground]
table {%
250000 -0.05
250000 1.05
};
\end{axis}

\end{tikzpicture}
        \label{fig:supp_cont_lab_training_curves}
    \end{subfigure}%
    \hspace*{\fill}
    \centering
    \begin{subfigure}[t]{0.48\textwidth}
        \centering % This file was created with tikzplotlib v0.9.15.
\begin{tikzpicture}

\definecolor{color0}{rgb}{0.12156862745098,0.466666666666667,0.705882352941177}
\definecolor{color1}{rgb}{0.596078431372549,0.874509803921569,0.541176470588235}
\definecolor{color2}{rgb}{0.549019607843137,0.337254901960784,0.294117647058824}
\definecolor{color3}{rgb}{1,0.733333333333333,0.470588235294118}
\definecolor{color4}{rgb}{0.580392156862745,0.403921568627451,0.741176470588235}
\definecolor{color5}{rgb}{0.768627450980392,0.611764705882353,0.580392156862745}
\definecolor{color6}{rgb}{0.682352941176471,0.780392156862745,0.909803921568627}

\begin{axis}[
% tick align=outside,
% tick pos=left,
% x grid style={white!69.0196078431373!black},
% xmajorgrids,
% xmin=-35000, xmax=735000,
% xtick style={color=black},
% y grid style={white!69.0196078431373!black},
% ymajorgrids,
% ymin=-0.55, ymax=11.55,
% ytick style={color=black}
height=4.5cm,
width=0.99\textwidth,
% width=1.1\textwidth,
tick align=inside,
tick pos=left,
x grid style={white!69.0196078431373!black},
ticklabel style = {font=\footnotesize},
xmajorgrids,
xmin=0, xmax=750000,
xtick style={color=black},
y grid style={white!69.0196078431373!black},
ymajorgrids,
ymin=-0.55, ymax=11.55,
ytick style={color=black},
ytick={0,1,2,3,4,5,6,7,8,9,10,11},
yticklabels = {0,1,2,3,4,5,6,7,8,9,10,11},
xtick={0, 100000, 200000, 300000, 400000, 500000, 600000, 800000, 1000000, 1200000, 1400000},
xticklabels={0.0, 1.0, 2.0, 3.0, 4.0, 5.0, 6.0, 8.0, 10.0, 12.0, },
every x tick scale label/.style={
    at={(1,0)},xshift=-15.5pt,yshift=-10.0pt,anchor=south west,inner sep=0pt},
% every x tick scale label/.style={
%     at={(1,0)},xshift=3.5pt,yshift=-2.5pt,anchor=south west,inner sep=0pt},
ylabel={Subsystem Index},
xlabel={Elapsed Total Training Steps}
]
\addplot [line width=8pt, color0]
table {%
0 0
50000 0
};
\addplot [line width=8pt, color1]
table {%
50000 4
100000 4
};
\addplot [line width=8pt, color1]
table {%
100000 4
150000 4
};
\addplot [line width=8pt, color1]
table {%
150000 4
200000 4
};
\addplot [line width=8pt, color1]
table {%
200000 4
250000 4
};
\addplot [line width=8pt, color2]
table {%
250000 10
300000 10
};
\addplot [line width=8pt, color2]
table {%
300000 10
350000 10
};
\addplot [line width=8pt, color3]
table {%
350000 3
400000 3
};
\addplot [line width=8pt, color3]
table {%
400000 3
450000 3
};
\addplot [line width=8pt, color4]
table {%
450000 8
500000 8
};
\addplot [line width=8pt, color5]
table {%
500000 11
550000 11
};
\addplot [line width=8pt, color6]
table {%
550000 1
600000 1
};
\addplot [line width=8pt, color5]
table {%
600000 11
650000 11
};
\addplot [line width=8pt, color5]
table {%
650000 11
700000 11
};
\addplot [line width=8pt, color5]
table {%
700000 11
750000 11
};
\addplot [line width=5.6pt, red, dashed]
table {%
250000 -0.55
250000 11.55
};
\end{axis}

\end{tikzpicture}
        \label{fig:supp_cont_lab_training_schedule}
    \end{subfigure}
    \caption{
        Numerical results for the continuous labyrinth environment.
        Left: Estimated task and subtask success probabilities during training.
        Right: Automatically generated subsystem training schedule.
    } 
    \label{fig:cont_lab_training_curves}
\end{figure*}

\subsubsection{Comparison with a Monolithic Approach to Reinforcement Learning}
We again compare the proposed ICRL algorithm to a monolithic approach to learning, in which the entire task is treated as a single subsystem.
The task specification requires that the agent reaches the target state \(\mdpState \in \controllerFinalStateSet_{target}\) with a probability of at least \(0.95\) from the labyrinth's initial state. 
The observations, actions, and PPO algorithm parameters are identical for both the monolithic and the compositional approaches to learning.

In this continuous labyrinth environment, the monolithic RL agent was entirely unsuccessful; it was unable to learn a policy that satisfies the task specification, although it was allowed more than \(13\) million training time-steps.
By contrast, our compositional RL system learned to satisfy the task specification after only 450,000 time-steps, which is $30$ times less than the number of training steps that were taken by the unsuccessful monolithic agent.
We reiterate that the compositional RL agent is taking advantage of information that is not available to the monolithic agent (namely, the entry and exit conditions of the subtasks). 
We include this comparison only to briefly highlight the benefits that are enjoyed when such information is available and is used to decompose the overall task.

\subsubsection{Additional Discussion.}
\label{sec:additional_discussion}

We note that all predictions made using the HLM will be sensitive to the values of \(\controllerPerformanceEstimate_{\controller}\) -- the estimated lower bounds on the probability of subtask success.
In our experiments, we compute \(\controllerPerformanceEstimate_{\controller}\) empirically by rolling out the subsystems from randomly sampled entry conditions.
While this technique provides only rough estimates of the true value of the lower bound (particularly in the case of the continuous labyrinth environment which has an uncountably infinite number of entry conditions per subtask), our results demonstrate that these empirical approximations are sufficient for high-level decision making.
The algorithm makes effective use of the HLM predictions to automatically select the subsystems that require training.
Any methods to further improve the estimates of \(\controllerPerformanceEstimate_{\controller}\) will only improve the performance of the ICRL algorithm.

\subsection{Compositionality for Planning and Verification in Task Transfer}
\label{sec:experiments_modular_task_transfer}

A key benefit of the proposed compositional approach to modeling and training RL systems, is the modular nature of our abstract representations of the subsystems.
By creating models of each subsystem in terms of its entry conditions, exit conditions, and its probability of subtask success, we are able to define mathematical interfaces between them.
These interfaces enable us to compose the subsystem models, and to verify their performance, before executing the composite system itself.
Trained subsystems can be mathematically composed in new ways to solve new tasks.
The high-level model representing this new composite system, can then be used to plan and to verify the system's execution of the new task, without requiring additional interactions with the task environment.

\begin{figure*}[t!]
    \input{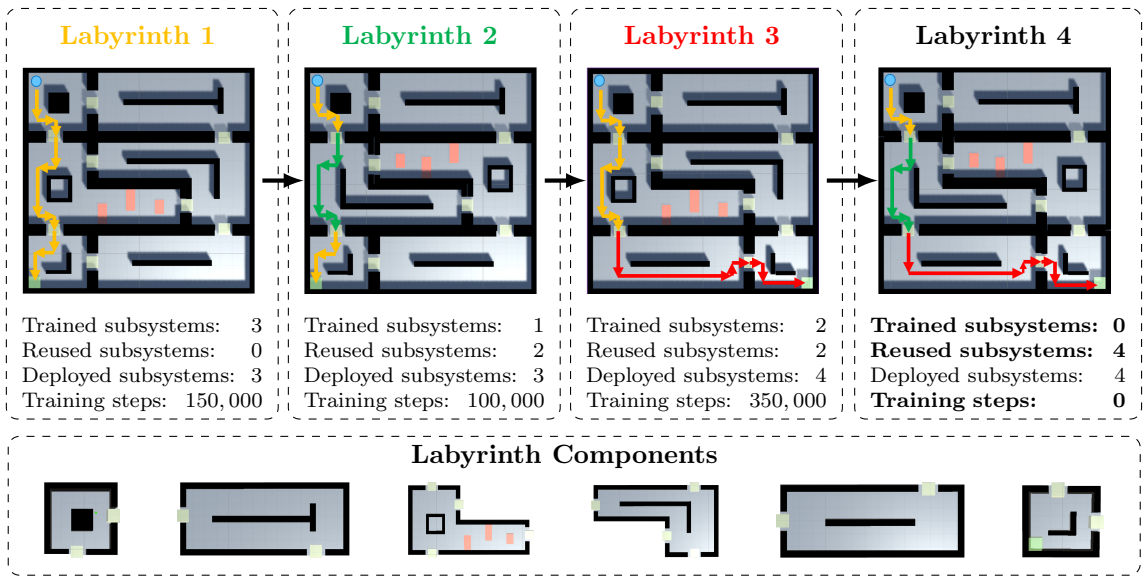}
    \caption{
    An illustration of the four labyrinth environments and the results of the ICRL algorithm after consecutive training and task transfer.
    Each color is associated with a particular environment. 
    The coloring of the trajectory components indicates the environment that the subsystem policy was initially trained on.
    A compositional RL system is trained on each task until its task specification is satisfied.
    We begin training on labyrinth \(1\).
    The arrows indicate the sequence in which the tasks were trained.
    The ICRL algorithm automatically reuses previously trained subsystems wherever possible when applied to a new environment.
    The six unique rooms that were used to construct the four labyrinth environments are illustrated at the bottom of the figure.
    } 
    \label{fig:four_labyrinth_figure}
\end{figure*}

\subsubsection{The Four Labyrinth Environments}
To demonstrate this capability, we present the results of applying the ICRL algorithm to the four labyrinth environments illustrated in Figure \ref{fig:four_labyrinth_figure}.
Each of these four labyrinths are constructed by arranging the same set of six rooms (illustrated by the figure's bottom row) into different configurations.
Similarly as to the environment illustrated in Figure \ref{fig:experiments_labyrinth_environments}, the RL system's objective in each of these environments is to roll a ball from the initial state to the goal state, while avoiding the unsafe lava states.
In all four of the labyrinths, the task specification requires that this objective be completed from the initial state with a probability of at least \(0.95\).
The environment states and actions, as well as the implementation of the subtask entry and exit conditions, are identical to those defined in \S \ref{sec:experiments_continuous_labyrinth}.

\subsubsection{Transfering Subsystem Policies Between Labyrinths}

We apply the ICRL algorithm to the four labyrinth environments separately, in the order illustrated by the black arrows in Figure \ref{fig:four_labyrinth_figure}.
We again train the subsystem policies using the reward function and the PPO algorithm setup described in \S \ref{sec:experiments_continuous_labyrinth}, however, we now make a slight modification to the agent's observations.
Instead of observing its location relative to the initial state of the entire labyrinth, each subsystem policy now receives observations of its position relative to the entry condition of its current subtask.

Many of the subtask policies are transferable between the different environments. 
Once a policy has learned to navigate a particular room in one of the labyrinths, the policy may be reused to navigate the same room in other labyrinths without necessarily requiring further training.
Our compositional approach to RL is able to model the transfer of such previously trained subsystem policies, and to take advantage of this information in planning solutions for new tasks.

We begin by applying the ICRL algorithm to labyrinth 1.
Let HLM\(_{1}\) denote the high-level model that is constructed to represent the corresponding compositional system.
Because this is the first task that we train on, there are no pretrained subsystem policies to be reused.
Once the task specification in the first labyrinth has been satisfied, we move on to the task in labyrinth 2.
We construct a separate high-level model, HLM\(_{2}\), for this separate environment.
While the graph structure of HLM\(_{2}\) will be different from that of HLM\(_{1}\), we may associate specific transitions in HLM\(_{2}\)---which represent executions of particular subsystems---with the transitions in HLM\(_{1}\) corresponding to the executions of the same subsystems.
More specifically, information pertaining to the model's parametric transition probability values \(\bernoulliProb_{\controller}\)---representing lower bounds on the probability of success for the various subsystem policies---may be reused.
In particular, if we intend to reuse a subsystem policy \(\policy_{\controller}\) in labyrinth 2, then we impose the constraints \eqref{eq:hlm_opt_lb_constraints}-\eqref{eq:hlm_opt_ub_constraints} that were empirically estimated while training \(\policy_{\controller}\) in labyrinth 1, when solving the parameter synthesis problem \eqref{eq:hlm_opt_objective}-\eqref{eq:hlm_opt_ub_constraints} in labyrinth 2.
In other words, we reuse our abstract models of the pretrained subsystems, when using the ICRL algorithm to plan and to verify a solution for the new task.

We run the ICRL algorithm until the task specification in labyrinth 2 is satisfied, before repeating the process in labyrinths 3 and 4.
We note that the latter labyrinth environments may reuse subsystem policies that were originally trained in any of the previously solved task environments.

\subsubsection{Numerical Results}

Figure \ref{fig:four_labyrinth_figure} illustrates the numerical results of this approach to compositional subtask transfer.
In labyrinth 1, three subsystems are trained to reach the goal, using a total of \(150,000\) training steps.
In labyrinth 2, the ICRL algorithm automatically plans a meta-policy that reuses two of the previously trained subsystem policies (orange), and that deploys one subsystem that has not yet been trained (green).
Using the empirically estimated probabilities of subtask success for the pretrained subsystems, a subtask specification is synthesized for the untrained one.
Once this subsystem (green) has been trained to meet its subtask specification, the composite system is guaranteed to meet its overall task specification in labyrinth 2 without having to train the reused subsystems (orange) in this new environment at all.
We empirically verify that the composite system achieves the same probability of task success as is predicted using HLM\(_{2}\).

The process is repeated for labyrinths 3 and 4.
In labyrinth 4, no additional training is required in the new task environment; all of the deployed subsystems policies were previously trained in one of the first three labyrinths.
We use the high-level model HLM\(_{4}\) to synthesize a meta-policy, and to verify that the corresponding compositional system will satisfy its probabilistic task specification, without taking a single step in the new environment.

\subsubsection{Discussion on Planning and Verification in Task Transfer}

We note that by incorporating the previously estimated constraint values \eqref{eq:hlm_opt_lb_constraints}-\eqref{eq:hlm_opt_ub_constraints} into optimization problem \eqref{eq:hlm_opt_objective}-\eqref{eq:hlm_opt_ub_constraints} while solving a new task, the ICRL algorithm automatically synthesizes meta-policies that take advantage of the transferred subsystem policies as much as possible.
In other words, it automatically synthesizes a meta-policy that takes a route that minimizes the number of additional subsystems requiring training.

Furthermore, we emphasize the fact that this high-level reasoning and decision-making is performed without actually deploying the composite reinforcement learning system in the new task environment.
By using the HLM to predict the result of creating new compositions of previously trained subsystems, we may verify the system's performance on new tasks before deploying said system at all.

\section{Conclusions}

\label{sec:conclusions}

The verification of reinforcement learning (RL) systems is a critical step towards their widespread deployment in engineering applications.
We develop a framework for verifiable and compositional RL in which collections of RL subsystems are composed to achieve an overall task.
We automatically decompose system-level task specifications into individual subtask specifications, and iteratively refine these subtask specifications while training subsystems to satisfy them.
Future directions will study extensions of the framework to incorporate subsystems that are trained from human demonstrations, multi-level task hierarchies, and multi-agent RL systems.

\acks{This work was supported in part by ONR N00014-20-1-2115, ARO W911NF-20-1-0140, NSF 1652113, and NSF 2214939.}

\appendix

% \color{blue}
\section{Proof of Theorem 1}
\label{sec:thm1_proof}

In words, we wish to show the following result: if each subsystem policy's probability of success is lower bounded by \(\bernoulliProb_{\controller}\), then under any given meta-policy, the probability of completing the task in the true environment is lower-bounded by the probability of reaching the goal state in the high-level model (HLM).

We begin by presenting definitions of the relevant probability values in the HLM \(\abstractMDP\) and in the environment POMDP \(\mdp\).
We then define what it means for a sequence of states and actions in \(\mdp\) to be \textit{consistent} with a sequence of high-level states and actions in \(\abstractMDP\).
Next, we present a lemma relating the probability of a sequence in \(\abstractMDP\) to the probability of its set of consistent sequences in \(\mdp\).
Finally, using the result of the lemma, we provide a proof for Theorem \ref{thm:hlm_bounds_true_performance}.

\paragraph{Defining the reachability probabilities in the HLM.}%{Preliminary Definitions. }
% We begin by defining \(\mathbb{P}^{\abstractInitialState}_{\abstractMDP, \hlmPolicy}(\Diamond \abstractSuccessState)\). 
Recall that the HLM \(\abstractMDP = (\abstractStateSet, \abstractInitialState, \abstractSuccessState, \abstractFailureState, \controllerSet, \abstractTransition)\) is a parametric Markov decision process with states \(\abstractStateSet\), action set \(\controllerSet\), and transition probability function \(\abstractTransition\) parametrized by \(\bernoulliProb_{\controller}\) for \(\controller \in \controllerSet\). 
Also, \(\hlmPolicy : \abstractStateSet \times \controllerSet \to [0,1]\) is a stationary policy on \(\abstractMDP\).

% \(\abstractState_0 \controller_0 \abstractState_1 \controller_1 ... \abstractState_{\numMetaDecision} \in (\abstractStateSet \times \controllerSet)^* \times \abstractStateSet\)

% Let \(Paths_{fin}(\abstractMDP, \hlmPolicy, \abstractInitialState)\) be the set of all \textit{finite path fragments} \(\abstractState_0 \controller_0 \abstractState_1 \controller_1 ... \abstractState_{\numMetaDecision}\) such that \(\abstractState_0\) \(=\) \(\abstractInitialState\) and for every \(i \in \{0,1,...,\numMetaDecision - 1\}\), \(\hlmPolicy(\abstractState_i, \controller_i) >0\) and \(\abstractTransition(\abstractState_i, \controller_i, \abstractState_{i+1}) > 0\).
Let \(\mathbb{P}^{\abstractInitialState}_{\abstractMDP}(\abstractState_0 \controller_0 \abstractState_1 \controller_1 ... \abstractState_{\numMetaDecision} | \hlmPolicy)\) denote the probability of the \textit{finite path fragment} \(\abstractState_0 \controller_0 \abstractState_1 \controller_1 ... \abstractState_{\numMetaDecision}\) in the Markov chain induced by meta-policy \(\abstractPolicy\) acting in \(\abstractMDP\) beginning from the initial state \(\abstractInitialState \in \abstractStateSet\).
This probability value can be expressed in terms of \(\abstractMDP\)'s transition probability parameters \(\bernoulliProb_{\controller}\) as 
\[
\measure^{\abstractInitialState}_{\abstractMDP}(\abstractState_0 \controller_0 \abstractState_1 \controller_1 ... \abstractState_{\numMetaDecision} | \hlmPolicy) = \mathbbm{1}_{\abstractState_{0} = \abstractInitialState}\prod_{i=0}^{\numMetaDecision} \abstractPolicy(\abstractState_i, \controller_i) \bernoulliProb_{\controller_{i}}.
\]

We express the probability \(\measure^{\abstractInitialState}_{\abstractMDP}(\Diamond \abstractSuccessState | \hlmPolicy)\) of reaching the goal state \(\abstractSuccessState\in \abstractStateSet\) in \(\abstractMDP\) as
\[
\measure^{\abstractInitialState}_{\abstractMDP}(\Diamond \abstractSuccessState | \hlmPolicy)
= \measure_{\abstractMDP}^{\abstractInitialState}(\reachTrajectories_{\abstractMDP}^{\abstractInitialState, \abstractSuccessState} | \hlmPolicy), \label{eq:def_hlm_reach_prob}
% = \sum_{\abstractState_0 \controller_0 ...\abstractState_{\numMetaDecision} \in \reachTrajectories_{\abstractMDP, \hlmPolicy}^{\abstractInitialState, \abstractSuccessState}} \measure_{\abstractMDP, \hlmPolicy}^{\abstractInitialState}(\abstractState_0 \controller_0 ... \abstractState_{\numMetaDecision}).
\]
where \(\measure_{\abstractMDP}^{\abstractInitialState}(\reachTrajectories_{\abstractMDP}^{\abstractInitialState, \abstractSuccessState} | \hlmPolicy)\) is the sum of the probabilities of all of the finite path fragments that reach the goal state \(\abstractSuccessState \in \abstractStateSet\) from initial state \(\abstractInitialState \in \abstractStateSet\).
That is,
\[
\reachTrajectories_{\abstractMDP}^{\abstractState, \abstractSuccessState} \defeq \{ \abstractState_{0}\controller_{0} \ldots \abstractState_{m} | \abstractState_{0} = \abstractInitialState, \abstractState_{m} = \abstractSuccessState, \abstractState_{i} \neq \abstractSuccessState \quad \forall i < m \}.
\]

\paragraph{Relating HLM path fragments to environment path fragments.} 
Given a finite path fragment \(\abstractState_0 \controller_0 ... \abstractState_{\numMetaDecision} \) in the HLM \(\abstractMDP\) such that \(\abstractState_{0}, \ldots, \abstractState_{\numMetaDecision} \neq \abstractFailureState\), we define the set of  all \textit{consistent environment path fragments} \(\reachTrajectories_{\mdp}(\abstractState_0 \controller_0...\abstractState_{\numMetaDecision})\) to be a set of finite path fragments \(\mdpState_{0} \hat{\controller}_{0} \mdpAction_{0} \ldots \mdpState_{n}\) (with \(\numTimeStep > \numMetaDecision\)) of states \(\mdpState_{t} \in \mdpStateSet\), subsystems \(\hat{\controller}_t \in \controllerSet\), and actions \(\mdpAction_{t} \in \mdpActionSet\) within the environment POMDP \(\mdp\).
Note that we include the subsystems \(\hat{\controller}_{t} \in \controllerSet\) in these path fragments, in order to explicitly keep track of which subsystem is active at time \(t\) within the environment path fragment.
Intuitively, \(\reachTrajectories_{\mdp}(\abstractState_0 \controller_0...\abstractState_{\numMetaDecision})\) is the set of all finite path fragments in \(\mdp\), that are consistent with the given path fragment \(\abstractState_0 \controller_0...\abstractState_{\numMetaDecision}\) in \(\abstractMDP\).
% We use the hat symbol simply to help distinguish it from the subsystem in the original path fragment \(\abstractState_0 \controller_0 ... \abstractState_{\numMetaDecision}\).
More formally, we define \(\reachTrajectories_{\mdp}(\abstractState_0 \controller_0...\abstractState_{\numMetaDecision})\) as the set of all path fragments \(\mdpState_{0} \hat{\controller}_{0} \mdpAction_{0} \ldots \mdpState_{n}\), such that the following four conditions hold.
\begin{enumerate}
    \item There exists some collection of \textit{subsystem switching times} \(0 = \metaDecisionTime_{0} < \metaDecisionTime_{1} < \ldots < \metaDecisionTime_{\numMetaDecision} = \numTimeStep\) at which the meta-policy is used to select at new subsystem to execute.
    I.e., there are \(\numMetaDecision\) time points \(\metaDecisionTime_{i}\), indexed by \(i\), at which the active subsystem changes \(\hat{\controller}_{\metaDecisionTime_{i} - 1} \neq \hat{\controller}_{\metaDecisionTime_{i}}\).
    At all other times in the path fragment, the active subsystem remains constant  \(\hat{\controller}_{\metaDecisionTime_{i}} = \hat{\controller}_{\metaDecisionTime_{i} + 1} = \ldots = \hat{\controller}_{\metaDecisionTime_{i+1} - 1}\).
    
    \item The subsystems \(\hat{\controller}_{t} \in \controllerSet\) specified by the environment path fragment are consistent with the subsystems \(\controller_{i} \in \controllerSet\) specified by the HLM path fragment.
    That is, for every subsystem switching time \(\metaDecisionTime_{i}\), we have \(\hat{\controller}_{\metaDecisionTime_{i}} = \controller_{i}\).
    
    \item The environment states \(\mdpState_{t} \in \mdpStateSet\) specified by the environment path fragment are consistent with the HLM states \(\abstractState_{i} \in \abstractStateSet\) specified by the HLM path fragment.
    That is, for every subsystem switching time \(\metaDecisionTime_{i}\), we have \(\mdpState_{\metaDecisionTime_{i}} \in \abstractState_{i}\) and \(\mdpState_{\metaDecisionTime_i}, \mdpState_{\metaDecisionTime_i + 1},...,\mdpState_{\metaDecisionTime_{i+1} - 1} \notin \abstractState_{i+1}\) (recall that HLM states \(\abstractState \in \abstractStateSet\) correspond to collections of environment states \(\mdpState \in \mdpStateSet\)).
    
    % \item \(\mdpState_{\metaDecisionTime_i}, \mdpState_{\metaDecisionTime_i + 1},...,\mdpState_{\metaDecisionTime_{i+1} - 1} \notin \abstractState_{i+1}\), and \(\mdpState_{\metaDecisionTime_i} \in \abstractState_{i}\) for every \(i \in \{0,1,...,\numMetaDecision\}\) (recall that HLM states \(\abstractState\) correspond to collections of environment states \(\mdpState\))
    
    \item For every pair of consecutive decision times, we have \(\metaDecisionTime_{i} - \metaDecisionTime_{i-1} \leq \timeHorizon_{\hat{\controller}_{\metaDecisionTime_{i}}}\).
    This corresonds to the assumption that the subsystem \(\hat{\controller}_{\metaDecisionTime_{i}}\) activated at time \(t=\metaDecisionTime_{i}\) completes its subtask within its allowed time horizon \(\timeHorizon_{\hat{\controller}_{\metaDecisionTime_{i}}}\).
    
    % \item Assuming that the subsystems are all successful at completing their subtasks, we must have \(\metaDecisionTime_{t+1} - \metaDecisionTime_{t} \leq \timeHorizon_{\controller_{i}}\).

\end{enumerate}

\paragraph{Defining the probability of environment path fragments.}

The probability of observing environment path fragment \(\mdpState_{0} \hat{\controller}_{0} \mdpAction_{0} \ldots \mdpState_{n} \in \reachTrajectories_{\mdp}(\abstractState_0 \controller_0...\abstractState_{\numMetaDecision})\) when executing meta-policy \(\abstractPolicy\) with subsystem policies \(\policy_{\controller_{1}}, \ldots, \policy_{\controller_{\numControllers}}\) in \(\mdp\), may be expressed as
\begin{align*}
\mathbb{P}^{\mdpInitialState}_{\mdp}&(\mdpState_{0}\hat{\controller}_{0}\mdpAction_{0} \ldots \mdpState_{\numTimeStep} | \abstractPolicy, \policy_{\controller_{1}}, \ldots, \policy_{\controller_{\numControllers}}) = \\
& \mathbbm{1}_{\mdpState_{0} = \mdpInitialState} \prod_{t=0}^{\numTimeStep -1}
\begin{cases}
\abstractPolicy([\mdpState_{t}]_{\eqRelation}, \hat{\controller}_{t}) \policy_{\hat{\controller}_{t}}(\history_{\metaDecisionTime_{i}:t}, \mdpAction_{t}) \mdpTransition(\mdpState_{t}, \mdpAction_{t}, \mdpState_{t+1}), \quad \text{if } t = \metaDecisionTime_{i} \text{ for some } i \in \{1,\ldots, \numMetaDecision\} \\ 
\policy_{\hat{\controller}_{i}}(\history_{\metaDecisionTime_{i}:t}, \mdpAction_{t}) \mdpTransition(\mdpState_{t}, \mdpAction_{t}, \mdpState_{t+1}), \hfill \text{otherwise}.
\end{cases}
\end{align*}
Here, \(\history_{t_{1}:t_{2}} = \pomdpObservation_{t_{1}} \mdpAction_{t_{1}} \pomdpObservation_{t_{1}+1} \mdpAction_{t_{1} + 1} \ldots \pomdpObservation_{t_{2}}\) denotes the history of observations and actions from time \(t_{1}\) up to time \(t_{2}\), where \(\pomdpObservation_{t} \sim \pomdpObservationFunction(\mdpState_{t})\).
Also, recall that \([\mdpState_{t}]_{\eqRelation}\) denotes the equivalence relation on \(\mdpStateSet\) used to define the HLM states \(\abstractStateSet\), i.e. \([\mdpState_{t}]_{\eqRelation} = \abstractState\) such that \(\mdpState_{t} \in \abstractState\).

We now present Lemma \ref{thm:lemma1}, which relates the probability of an HLM path fragment, to the probability of the set of all the consistent environment path fragments.

\begin{lemma}
\label{thm:lemma1}
Let \(\abstractState_0 \controller_0 ...\abstractState_{\numMetaDecision} \in Paths_{fin}(\abstractMDP, \hlmPolicy, \abstractInitialState)\) be a finite path fragment in \(\abstractMDP\) such that \(\abstractState_0, \ldots, \abstractState_{\numMetaDecision} \neq \abstractFailureState\). If, for every subsystem \(\controller \in \controllerSet\) and for every entry condition \(\mdpState \in \controllerInitialStateSet_{\controller}\) we have \(\mathbb{P}_{\mdp}^{\mdpState}(\Diamond_{\leq \timeHorizon_{\controller}}\controllerFinalStateSet_{\controller} | \policy_{\controller}) \geq \bernoulliProb_{\controller}\), then the following inequality holds.

\begin{equation}
    \measure_{\mdp}^{\mdpInitialState}(\reachTrajectories_{\mdp}(\abstractState_0 \controller_0 ... \abstractState_{\numMetaDecision}) |  \abstractPolicy, \policy_{\controller_{1}}, \ldots, \policy_{\controller_{\numControllers}}) \geq \measure_{\abstractMDP}^{\abstractInitialState}(\abstractState_0 \controller_0 ... \abstractState_{\numMetaDecision} | \hlmPolicy)
\end{equation}
\end{lemma}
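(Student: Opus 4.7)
The plan is to carry out a backward induction on the HLM path index $i$, lower-bounding the probability of all environment path fragments that are consistent with the HLM suffix $c_i \tilde{s}_{i+1} c_{i+1} \ldots \tilde{s}_m$. For each $i \in \{0, 1, \ldots, m\}$ and each environment state $s$, define
\[
R_i(s) \defeq \mathbb{P}^{s}_{\mdp}\bigl(\reachTrajectories_{\mdp}(\abstractState_i c_i \abstractState_{i+1} \ldots \abstractState_m) \,\big|\, \abstractPolicy, \policy_{\controller_1}, \ldots, \policy_{\controller_{\numControllers}}\bigr),
\]
i.e.\ the probability of producing an environment path fragment starting at $s$ that is consistent with the HLM suffix beginning at $\tilde{s}_i$. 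The target inequality is the special case $i=0$, $s = \mdpInitialState$, once we verify that $s_I \in \abstractState_0$ (which is ensured by $\abstractState_0 = \abstractInitialState = [\mdpInitialState]_{\eqRelation}$ from the indicator $\mathbbm{1}_{\tilde{s}_0 = \tilde{s}_I}$).

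The first step is to obtain a recursion for $R_i(s)$. For $s \in \abstractState_i$, the meta-policy selects $\controller_i$ with probability $\abstractPolicy(\abstractState_i, \controller_i)$; then $\policy_{\controller_i}$ is executed and, by the definition of a consistent environment path fragment, must first hit some state $s' \in \controllerFinalStateSet_{\controller_i} \subseteq \abstractState_{i+1}$ within its horizon $\timeHorizon_{\controller_i}$. Letting $\alpha_{\controller_i}(s, s')$ denote the probability that $\policy_{\controller_i}$ started at $s$ first reaches $\controllerFinalStateSet_{\controller_i}$ at $s'$ within time $\timeHorizon_{\controller_i}$, I will show
\[
R_i(s) \;=\; \abstractPolicy(\abstractState_i, \controller_i) \sum_{s' \in \controllerFinalStateSet_{\controller_i}} \alpha_{\controller_i}(s, s')\, R_{i+1}(s'),
\]
with base case $R_m(s) = 1$ for $s \in \abstractState_m$. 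The only non-routine check here is that the subsystem's observation history is reset at each switching time, so that the internal evolution between $\metaDecisionTime_i$ and $\metaDecisionTime_{i+1}$ factorises cleanly and depends only on the entry state $s$; this follows from the options-style execution model assumed in \S\ref{sec:subsystems_and_subtasks}.

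Next, I will prove by backward induction that $R_i(s) \geq L_i \defeq \prod_{j=i}^{m-1} \abstractPolicy(\abstractState_j, \controller_j)\, \bernoulliProb_{\controller_j}$ for every $s \in \abstractState_i$. The base case $R_m \equiv 1 = L_m$ is immediate. For the inductive step, note that $\controllerFinalStateSet_{\controller_i} \subseteq \abstractState_{i+1}$ implies every $s'$ appearing in the sum lies in $\abstractState_{i+1}$, so by the induction hypothesis $R_{i+1}(s') \geq L_{i+1}$. Combining with the hypothesis of the lemma,
\[
\sum_{s' \in \controllerFinalStateSet_{\controller_i}} \alpha_{\controller_i}(s, s') \;=\; \mathbb{P}^{s}_{\mdp}(\Diamond_{\leq \timeHorizon_{\controller_i}} \controllerFinalStateSet_{\controller_i} \mid \policy_{\controller_i}) \;\geq\; \bernoulliProb_{\controller_i},
\]
since $s \in \abstractState_i \subseteq \controllerInitialStateSet_{\controller_i}$ (the containment uses that $\controller_i \in \controllerSet(\abstractState_i)$ is well-defined, a direct consequence of the equivalence relation $\eqRelation$). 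Substituting into the recursion yields $R_i(s) \geq \abstractPolicy(\abstractState_i, \controller_i)\, \bernoulliProb_{\controller_i}\, L_{i+1} = L_i$.

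Finally, evaluating at $i = 0$ and $s = \mdpInitialState \in \abstractState_0$ gives $R_0(\mdpInitialState) \geq L_0 = \prod_{i=0}^{m-1} \abstractPolicy(\abstractState_i, \controller_i)\, \bernoulliProb_{\controller_i}$, and the right-hand side equals $\mathbb{P}^{\abstractInitialState}_{\abstractMDP}(\abstractState_0 \controller_0 \ldots \abstractState_m \mid \hlmPolicy)$. The main obstacle I anticipate is the bookkeeping in the first step: verifying that the event $\reachTrajectories_{\mdp}(\abstractState_i c_i \ldots \abstractState_m)$ factorises into the disjoint union of cylinder events over $(s', \metaDecisionTime_{i+1})$ pairs corresponding to the first hit of $\controllerFinalStateSet_{\controller_i}$. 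Once this measure-theoretic decomposition is set up carefully, the inductive argument is straightforward.
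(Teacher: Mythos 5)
Your proof is correct and follows essentially the same route as the paper's: an induction along the HLM path fragment in which each step combines the meta-policy probability \(\abstractPolicy(\abstractState_i,\controller_i)\) with the lower bound \(\mathbb{P}^{\mdpState}_{\mdp}(\Diamond_{\leq \timeHorizon_{\controller_i}}\controllerFinalStateSet_{\controller_i}\mid\policy_{\controller_i})\geq\bernoulliProb_{\controller_i}\), which holds uniformly over entry conditions, together with the containment \(\controllerFinalStateSet_{\controller_i}\subseteq\abstractState_{i+1}=succ(\controller_i)\). The only difference is organizational but slightly to your advantage: you run the induction backward and bound \(R_i(\mdpState)\) pointwise for every \(\mdpState\in\abstractState_i\), which sidesteps the paper's forward-induction appeal to an unspecified conditional distribution \(\alpha(\mdpState)\) over the entry states reached by the prefix.
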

\begin{proof}[Proof of Lemma \ref{thm:lemma1}]
This inequality follows by induction from the Lemma's assumption.
% that for every subsystem \(\controller \in \controllerSet\) and entry condition \(\mdpState \in \controllerInitialStateSet_{\controller}\), it holds that \(\successProb_{\policy_{\controller}}^{\controller}(\mdpState) \geq \bernoulliProb_{\controller}\) (recall that \(\successProb_{\policy_{\controller}}^{\controller}(\mdpState) \defeq \mathbb{P}_{\mdp}^{\mdpState}(\Diamond_{\leq \timeHorizon_{\controller}}\controllerFinalStateSet_{\controller} | \policy_{\controller})\)).

Given the finite path fragment \(\abstractState_0 \controller_0 ... \abstractState_{\numMetaDecision}\) we begin by considering the trivial prefix \(\abstractState_{0}\).
We have that
\[\measure_{\abstractMDP}^{\abstractInitialState}(\abstractState_0 | \hlmPolicy) = \measure_{\mdp}^{\mdpInitialState}(\reachTrajectories_{\mdp}(\abstractState_0) | \abstractPolicy, \policy_{\controller_{1}}, \ldots, \policy_{\controller_{\numControllers}}) = 1.\]

Now, for any \(0 \leq l \leq \numMetaDecision - 1\) we consider the prefix \(\abstractState_0 \controller_0 ... \abstractState_{l}\) of the path fragment \(\abstractState_0 \controller_0 ... \abstractState_l \controller_l ... \abstractState_{\numMetaDecision}\).
Suppose that the following inequality holds.
\[\measure_{\mdp}^{\mdpInitialState}(\reachTrajectories_{\mdp}(\abstractState_0 \controller_0 ... \abstractState_l) | \abstractPolicy, \policy_{\controller_{1}}, \ldots, \policy_{\controller_{\numControllers}}) \geq \measure_{\abstractMDP}^{\abstractInitialState}(\abstractState_0 \controller_0 ... \abstractState_l | \hlmPolicy).
\]
We may write the probability \(\measure_{\mdp}^{\mdpInitialState}(\reachTrajectories_{\mdp}(\abstractState_0 \controller_0 ... \abstractState_l \controller_l \abstractState_{l+1}) | \abstractPolicy, \policy_{\controller_{1}}, \ldots, \policy_{\controller_{\numControllers}})\), corresponding to the prefix of length \(l+1\), in terms of the probability of the prefix of length \(l\) and the probability of all environment path fragments consistent with the HLM transition from \(\abstractState_l\) to \(\abstractState_{l+1}\).
\begin{align*}
\measure_{\mdp}^{\mdpInitialState}(\reachTrajectories_{\mdp}&(\abstractState_0 \controller_0 ... \abstractState_l \controller_l \abstractState_{l+1}) | \abstractPolicy, \policy_{\controller_{1}}, \ldots, \policy_{\controller_{\numControllers}}) = \\ 
& \measure_{\mdp}^{\mdpInitialState}(\reachTrajectories_{\mdp}(\abstractState_0 \controller_0 ... \abstractState_l) | \abstractPolicy, \policy_{\controller_{1}}, \ldots, \policy_{\controller_{\numControllers}}) \sum_{\mdpState \in \abstractState_l} \alpha(\mdpState) * \abstractPolicy(\abstractState_{l}, \controller_l) * \mathbb{P}_{\mdp}^{\mdpState}(\Diamond_{\leq \timeHorizon_{\controller_l}} \abstractState_{l+1} | \policy_{\controller_l}).
\end{align*}
Here, \(\alpha(\mdpState)\) is \textit{some} distribution over the states \(\mdpState \in \abstractState_{l}\) such that \(\sum_{\mdpState \in \abstractState_l} \alpha(\mdpState) = 1\).

Note that because \(\abstractState_{l+1} \neq \abstractFailureState \) by assumption, it must be the case that \(\abstractState_{l+1} = succ(\controller_l)\). This implies, by definition, that \(\controllerFinalStateSet_{\controller_l} \subseteq \abstractState_{l+1}\). Thus, 
\[
\mathbb{P}_{\mdp}^{\mdpState}(\Diamond_{\leq \timeHorizon_{\controller_l}}\abstractState_{l+1} | \policy_{\controller_l}) \geq \mathbb{P}_{\mdp}^{\mdpState}(\Diamond_{\leq \timeHorizon_{\controller_l}} \controllerFinalStateSet_{\controller_l} | \policy_{\controller_l}) \geq \bernoulliProb_{\controller_l},
\] 
where the final inequality follows from the lemma's assumption and from the fact that \(\mdpState \in \controllerInitialStateSet_{\controller_l}\) for every \(\mdpState \in \abstractState_l\).
Putting all this together, we have 
\[\measure_{\mdp}^{\mdpInitialState}(\reachTrajectories_{\mdp}(\abstractState_0 \controller_0 ... \abstractState_l \controller_l \abstractState_{l+1}) | \abstractPolicy, \policy_{\controller_{1}}, \ldots, \policy_{\controller_{\numControllers}}) \geq \measure_{\mdp}^{\mdpInitialState}(\reachTrajectories_{\mdp}(\abstractState_0 \controller_0 ... \abstractState_l) | \abstractPolicy, \policy_{\controller_{1}}, \ldots, \policy_{\controller_{\numControllers}}) * \hlmPolicy(\abstractState_l, \controller_l) * \bernoulliProb_{\controller_l}
\]
and given the assumption of our induction step that \(\measure_{\mdp}^{\mdpInitialState}(\reachTrajectories_{\mdp}(\abstractState_0 \controller_0 ... \abstractState_l) | \abstractPolicy, \policy_{\controller_{1}}, \ldots, \policy_{\controller_{\numControllers}}) \geq \measure_{\abstractMDP}^{\abstractInitialState}(\abstractState_0 \controller_0 ... \abstractState_l | \hlmPolicy)\), we obtain the inequality 
\begin{align*}
\measure_{\mdp}^{\mdpInitialState}(\reachTrajectories_{\mdp}(\abstractState_0 \controller_0 ... \abstractState_l \controller_l \abstractState_{l+1}) | \abstractPolicy, \policy_{\controller_{1}}, \ldots, \policy_{\controller_{\numControllers}}) &\geq \measure_{\abstractMDP}^{\abstractInitialState}(\abstractState_0 \controller_0 ... \abstractState_l | \hlmPolicy) * \hlmPolicy(\abstractState_l, \controller_l) * \bernoulliProb_{\controller_l} \\
&= \measure_{\abstractMDP}^{\abstractInitialState}(\abstractState_0 \controller_0 ... \abstractState_l \controller_l \abstractState_{l+1} | \hlmPolicy).
\end{align*}
% Here, the final equality follows from the definition of the transition probability \(\abstractTransition(\abstractState_{l}, \controller_l, \abstractState_{l+1})\) in terms of the parameter value \(\bernoulliProb_{\controller_{l}}\).
By induction, we conclude the proof.

\end{proof}

With this lemma in place, we now proceed to the proof of Theorem \ref{thm:hlm_bounds_true_performance}.

\begin{manualtheorem}{1}
Let \(\controllerSet = \{\controller_1, \controller_2, ..., \controller_{\numControllers}\}\) be a collection of composable subsystems with respect to initial state \(\mdpInitialState\) and target set \(\controllerFinalStateSet_{targ}\) within the environment POMDP \(\mdp\). Define \(\abstractMDP\) to be the corresponding HLM and let \(\hlmPolicy\) be a policy in \(\abstractMDP\). 
If, for every subsystem \(\controller_{i} \in \controllerSet\) and for every entry condition \(\mdpState \in \controllerInitialStateSet_{\controller_{i}}\), 
\(\mathbb{P}^{\mdpState}_{\mdp}(\Diamond_{\leq \timeHorizon_{\controller_{i}}}\controllerFinalStateSet_{\controller_{i}} | \policy_{\controller_{i}}) \geq \bernoulliProb_{\controller_{i}}\)
% \(\successProb_{\policy_{\controller}}^{\controller}(\mdpState) \geq \bernoulliProb_{\controller}\)
, then 
\[\mathbb{P}^{\mdpInitialState}_{\mdp}(\Diamond \controllerFinalStateSet_{targ} | \abstractPolicy, \policy_{\controller_{1}}, \ldots, \policy_{\controller_{\numControllers}}) \geq \mathbb{P}^{\abstractInitialState}_{\abstractMDP}(\Diamond \abstractSuccessState | \hlmPolicy).\]
\end{manualtheorem}%

\begin{proof}[Proof of Theorem \ref{thm:hlm_bounds_true_performance}]
Consider any finite path fragment \(\abstractState_0 \controller_0 ...\abstractState_{\numMetaDecision} \in \reachTrajectories_{\abstractMDP}^{\abstractInitialState, \abstractSuccessState}\) that reaches the goal state \(\abstractSuccessState\) in the HLM \(\abstractMDP\).
By Lemma 1, we know that \(\measure_{\mdp}^{\mdpInitialState}(\reachTrajectories_{\mdp}(\abstractState_0 \controller_0 ... \abstractState_{\numMetaDecision}) | \abstractPolicy, \policy_{\controller_{1}}, \ldots, \policy_{\controller_{\numControllers}}) \geq \measure_{\abstractMDP}^{\abstractInitialState}(\abstractState_0 \controller_0 ... \abstractState_{\numMetaDecision} | \hlmPolicy)\), which implies that
\begin{align*}
    \sum_{\abstractState_0 \controller_0 ... \abstractState_{\numMetaDecision} \in \reachTrajectories_{\abstractMDP}^{\abstractInitialState, \abstractSuccessState}} \measure_{\mdp}^{\mdpInitialState}(\reachTrajectories_{\mdp}(\abstractState_0 \controller_0 ... \abstractState_{\numMetaDecision}) | \abstractPolicy, \policy_{\controller_{1}}, \ldots, \policy_{\controller_{\numControllers}}) &\geq \sum_{\abstractState_0 \controller_0 ... \abstractState_{\numMetaDecision} \in \reachTrajectories_{\abstractMDP}^{\abstractInitialState, \abstractSuccessState}} \measure_{\abstractMDP}^{\abstractInitialState}(\abstractState_0 \controller_0 ... \abstractState_{\numMetaDecision} | \hlmPolicy)\\
    &= \mathbb{P}_{\abstractMDP}^{\abstractInitialState}(\Diamond \abstractSuccessState | \hlmPolicy).
\end{align*}
Let \(\reachTrajectories_{\mdp}^{\mdpInitialState, \controllerFinalStateSet_{targ}}\) denote the set of all environment path fragments that reach the target set \(\controllerFinalStateSet_{targ}\) from the initial state \(\mdpInitialState\), i.e. \(\reachTrajectories_{\mdp}^{\mdpInitialState, \controllerFinalStateSet_{targ}} \defeq \{\mdpState_{0} \hat{\controller}_{0} \mdpAction_{0} \ldots \mdpState_{\numTimeStep} | \mdpState_{0} = \mdpInitialState, \mdpState_{\numTimeStep} \in \controllerFinalStateSet_{targ}\}\).
We note that for any HLM path fragment \(\abstractState_0 \controller_0 ...\abstractState_{\numMetaDecision} \in \reachTrajectories_{\abstractMDP}^{\abstractInitialState, \abstractSuccessState}\), it follows from the definition of the HLM goal state \(\abstractState_{\numMetaDecision} = \abstractSuccessState\) that \(\reachTrajectories_{\mdp}(\abstractState_0 \controller_0 ... \abstractState_{\numMetaDecision}) \subseteq \reachTrajectories_{\mdp}^{\mdpInitialState, \controllerFinalStateSet_{targ}}\).
Furthermore, the sets of consistent environment path fragments \(\reachTrajectories_{\mdp}(\abstractState_0 \controller_0 ... \abstractState_{\numMetaDecision})\) for every HLM path fragment \(\abstractState_0 \controller_0 ...\abstractState_{\numMetaDecision} \in \reachTrajectories_{\abstractMDP}^{\abstractInitialState, \abstractSuccessState}\) are pairwise disjoint. So, we conclude that 

\begin{align*}
    \measure_{\mdp}^{\mdpInitialState}(\Diamond \controllerFinalStateSet_{targ} | \abstractPolicy, \policy_{\controller_{1}}, \ldots, \policy_{\controller_{\numControllers}}) &= \measure_{\mdp}^{\mdpInitialState}(\reachTrajectories_{\mdp}^{\mdpInitialState, \controllerFinalStateSet_{targ}} | \abstractPolicy, \policy_{\controller_{1}}, \ldots, \policy_{\controller_{\numControllers}}) \\
    &\geq \sum_{\abstractState_0 \controller_0 ... \abstractState_{\numMetaDecision} \in \reachTrajectories_{\abstractMDP}^{\abstractInitialState, \abstractSuccessState}} \measure_{\mdp, \abstractPolicy}^{\mdpInitialState}(\reachTrajectories_{\mdp}(\abstractState_0 \controller_0 ... \abstractState_{\numMetaDecision}) | \abstractPolicy, \policy_{\controller_{1}}, \ldots, \policy_{\controller_{\numControllers}}) \\
    &\geq \mathbb{P}_{\abstractMDP, \hlmPolicy}^{\abstractInitialState}(\Diamond \abstractSuccessState | \hlmPolicy).
\end{align*}

\end{proof}

% \color{black}
\section{Reinforcement Learning Algorithm Parameters}
\label{sec:appendix_ppo_parameters}

In all of our numerical case studies we used the proximal policy optimization (PPO) algorithm with the same collection of algorithm parameters to train the subsystem policies.
The value of the discount factor is \(\mdpDiscount = 0.99\). 
The value of the generalized advantage estimation parameter is \(\lambda = 0.95\).
The value of the clipping parameter is \(\varepsilon =0.2\).
The value of the entropy coefficient is \(0.0\).
The the value function coefficient is \(0.5\).
The value of the learning rate is \(2.5\text{e}-4\).
We use a batch size of \(62\) and \(10\) epochs per update of the PPO algorithm.

\bibliography{bibliography.bib}
\bibliographystyle{theapa}

\end{document}